\documentclass{article} 

\usepackage{geometry}
\geometry{verbose,tmargin=1in,bmargin=1in,lmargin=1in,rmargin=1in}
\usepackage{amsmath}
\usepackage{amssymb}
\usepackage{amsthm}
\usepackage{xspace}
\usepackage{mathrsfs}
\usepackage{hyperref}
\usepackage{url}
\usepackage{algorithm}
\usepackage[noend]{algpseudocode}

\usepackage{amsmath,amsfonts,bm}









\def\eqref#1{equation~\ref{#1}}









\def\1{\bm{1}}










\DeclareMathAlphabet{\mathsfit}{\encodingdefault}{\sfdefault}{m}{sl}
\SetMathAlphabet{\mathsfit}{bold}{\encodingdefault}{\sfdefault}{bx}{n}











\newcommand{\E}{\mathbb{E}}

\newcommand{\R}{\mathbb{R}}

\newcommand{\Var}{\mathrm{Var}}



\usepackage[utf8]{inputenc} 
\usepackage[T1]{fontenc}    
\usepackage{hyperref}       
\usepackage{url}            
\usepackage{booktabs}       
\usepackage{amsfonts}       
\usepackage{nicefrac}       
\usepackage{microtype}      
\usepackage{xcolor}         
\usepackage{comment}

\usepackage{amssymb}
\usepackage{amsmath}
\usepackage{amsthm}
\usepackage{float}
\usepackage{bbm}
\usepackage{thm-restate}
\usepackage{natbib} 

\usepackage{colortbl}
\usepackage{graphicx}
\usepackage{subcaption}
\usepackage{cleveref}
\usepackage{algorithmicx}

\usepackage{thmtools}
\newtheorem{theorem}{Theorem}
\newtheorem{definition}{Definition}

\newtheorem{fact}[theorem]{Fact}
\newtheorem{lemma}[theorem]{Lemma}

\newcommand{\bs}[1]{\boldsymbol{#1}}
\newcommand{\bv}[1]{\mathbf{#1}}
\newcommand{\tr}{\mathrm{tr}}

\DeclareMathOperator*{\nnz}{nnz}

\crefname{figure}{Figure}{Figures}
\newcommand{\algrule}[1][.2pt]{\par\vskip.2\baselineskip\hrule height #1\par\vskip.2\baselineskip}
\algrenewcommand\algorithmicrequire{\textbf{Input:}}
\algrenewcommand\algorithmicensure{\textbf{Output:}}

  \usepackage{nth}
  \usepackage{intcalc}

  \newcommand{\cAAAI}[1]{AAAI\ Conference\ on\ Artificial (AAAI)}

   \newcommand{\cESA}[1]{European\ Symposium\ on\ Algorithms\ (ESA)}

\title{Matrix Product Sketching via Coordinated Sampling}


\author{Majid Daliri$^\dagger$, Juliana Freire$^\dagger$, Danrong Li$^*$, Christopher Musco$^\dagger$\\$^\dagger$New York University\\
$^*$Pennsylvania State University}

%

\begin{document}

\maketitle

\begin{abstract}
We revisit the well-studied problem of approximating a matrix product, $\bv{A}^T\bv{B}$, based on small space sketches $\mathcal{S}(\bv{A})$ and  $\mathcal{S}(\bv{B})$ of $\bv{A} \in \R^{n \times d}$ and $\bv{B}\in \R^{n \times m}$. We are interested in the setting where the sketches must be computed independently of each other, except for the use of a shared random seed. We prove that, when $\bv{A}$ and $\bv{B}$ are sparse, methods based on \emph{coordinated random sampling} can outperform classical linear sketching approaches, like Johnson-Lindenstrauss Projection or CountSketch. For example, to obtain Frobenius norm error $\epsilon\|\bv{A}\|_F\|\bv{B}\|_F$, coordinated sampling requires sketches of size $O(s/\epsilon^2)$ when $\bv{A}$ and $\bv{B}$ have at most $s \leq d,m$ non-zeros per row. In contrast, linear sketching leads to sketches of size $O(d/\epsilon^2)$ and $O(m/\epsilon^2)$ for $\bv{A}$ and $\bv{B}$. We empirically evaluate our approach on two applications: 1) distributed linear regression in databases, a problem motivated by tasks like dataset discovery and augmentation, and 2) approximating attention matrices in transformer-based language models. In both cases, our sampling algorithms yield an order of magnitude improvement over linear sketching.
\end{abstract}

\section{Introduction}
Over the past 20 years, sketching and sampling methods have emerged as powerful tools for solving massive linear algebraic problems that arise in machine learning, data science, and scientific computing \citep{Woodruff:2014,DrineasMahoney:2016,MartinssonTropp:2020}. Matrix and vector sketching has also been widely applied in federated learning \citep{RothchildPandaUllah:2020,KonecnyMcMahanYu:2020}, distributed learning \citep{JiangFuYang:2018}, and beyond \citep{CohenElderMusco:2015}.
One of the most fundamental problems where randomization has found success is matrix-matrix multiplication: we are given $\bv{A} \in \R^{d \times n}$ and $\bv{B}\in \R^{n \times m}$ and hope to compute an approximation to the product $\bv{A}^T\bv{B} \in \R^{d\times m}$. Naively, it takes $O(dnm)$ time to compute $\bv{A}^T\bv{B}$ exactly, or a bit less if fast (rectangular) matrix multiplication methods are used \citep{Le-Gall:2012}. Also relevant to our paper is communication cost: when $\bv{A}$ and $\bv{B}$ are stored on separate machines, computing $\bv{A}^T\bv{B}$ requires communicating at least $O(d\cdot \min(n,m))$ numbers (either all of $\bv{A}$ or all of $\bv{B}$). We are interested in the question of whether this bound can be improved. 

In particular, we consider solving matrix-matrix multiplication in the widely studied \emph{sketching setting} \citep{NelsonNguyen:2013}. The goal is to compute small space sketches $\mathcal{S}(\bv{A})$ and $\mathcal{S}(\bv{B})$, from which the matrix product can be approximated using some routine $\mathcal{F}(\mathcal{S}(\bv{A}), \mathcal{S}(\bv{B})) \approx \bv{A}^T\bv{B}$. Ideally, $\mathcal{S}(\bv{A})$ and $\mathcal{S}(\bv{B})$ should be much smaller than the space required to store $\bv{A}$ and $\bv{B}$ -- i.e. smaller than $O(dn)$ and $O(nm)$ space for dense matrices, or smaller than $O(\nnz(\bv{A}))$ and  $O(\nnz(\bv{B}))$ space when the matrices are sparse with $\nnz$ denoting the number of non-zeros. This reduction in size allows the sketches to be processed, stored, and communicated with less cost than the original matrices. 

We emphasize that, in the sketching setting, while $\mathcal{S}(\bv{A})$ and $\mathcal{S}(\bv{B})$ can be computed using a shared source of random bits (e.g., for constructing hash functions), no additional communication is allowed between the processes computing $\mathcal{S}(\bv{A})$ and $\mathcal{S}(\bv{B})$. This restriction is crucial in settings where communication between processes is expensive, slow, or impossible, as is often the case in distributed systems, or in applications where $\bv{A}$ and $\bv{B}$ are processed at different times. Moreover, the restriction is critical in applications that require repeated matrix multiplications. For example, if we want to approximate ${\bv{A}^1}^T\bv{B}, \ldots, {\bv{A}^q}^T\bv{B}$ for a set of matrices $\bv{A}^1, \ldots, \bv{A}^q$ using the same sketch $\mathcal{S}(\bv{B})$, that sketch cannot be tailored to any one particular $\bv{A}^i$. We further discuss applications of matrix-product sketching to central problems like dataset discovery and multi-vector retrieval in Section \ref{sec:applications}.



\subsection{Prior Work}
The best existing sketching methods for matrix product approximation are based on \emph{linear sketches} that compress $\bv{A}$ and $\bv{B}$ via multiplication by a random matrix generated with a shared random seed.  In particular, below we state a seminal result of Sarlós that was later refined and generalized (see \cite{CohenNelsonWoodruff:2016} or \cite{Woodruff:2014} for more detailed discussion).

\begin{fact}[\citep{Sarlos:2006,KaneNelson:2014,CohenNelsonWoodruff:2016}]
\label{fact:sarlos}
Let $\bs{\Pi} \in \R^{k\times n}$ be a scaled random Gaussian matrix, random sign matrix, CountSketch matrix \citep{CharikarChenFarach-Colton:2002}, or any of a variety of other randomized linear embeddings. If $k = O\left(\frac{1}{\epsilon^2\delta}\right)$, then with probability at least $1-\delta$, 
\begin{align*}
\|(\bs{\Pi}\bv{A})^T(\bs{\Pi}\bv{B}) - \bv{A}^T\bv{B}\|_F \leq \epsilon \|\bv{A}\|_F\|\bv{B}\|_F.
\end{align*}
$\bs{\Pi}\bv{A}\in \R^{k \times d}$ and $\bs{\Pi}\bv{B}\in \R^{k \times m}$ are sketches of size $O(d/\epsilon^2\delta)$ and $O(m/\epsilon^2\delta)$ respectively. For dense matrices, these sketches are smaller than $\bv{A}$ and $\bv{B}$, respectively, whenever $\frac{1}{\epsilon^2\delta}< n$.
\end{fact}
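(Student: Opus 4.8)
The plan is to reduce the Frobenius-norm guarantee to an entrywise second-moment estimate and then finish with Markov's inequality. Write $\bv{a}_1,\dots,\bv{a}_d \in \R^n$ for the columns of $\bv{A}$ and $\bv{b}_1,\dots,\bv{b}_m \in \R^n$ for the columns of $\bv{B}$, so that the $(i,j)$ entry of $\bv{A}^T\bv{B}$ is $\bv{a}_i^T\bv{b}_j$ while the corresponding entry of $(\bs{\Pi}\bv{A})^T(\bs{\Pi}\bv{B})$ is $\bv{a}_i^T\bs{\Pi}^T\bs{\Pi}\bv{b}_j$. Then
\begin{align*}
\E\left[\|(\bs{\Pi}\bv{A})^T(\bs{\Pi}\bv{B}) - \bv{A}^T\bv{B}\|_F^2\right] = \sum_{i=1}^d\sum_{j=1}^m \E\left[\left(\bv{a}_i^T\bs{\Pi}^T\bs{\Pi}\bv{b}_j - \bv{a}_i^T\bv{b}_j\right)^2\right],
\end{align*}
so it is enough to analyze one scalar estimator $\bv{x}^T\bs{\Pi}^T\bs{\Pi}\bv{y}$ for fixed $\bv{x},\bv{y}\in\R^n$.

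The heart of the argument is to show this scalar estimator is unbiased and has variance $O(1/k)$ relative to $\|\bv{x}\|_2^2\|\bv{y}\|_2^2$, i.e.
\begin{align*}
\E\left[\bv{x}^T\bs{\Pi}^T\bs{\Pi}\bv{y}\right] = \bv{x}^T\bv{y}, \qquad \E\left[\left(\bv{x}^T\bs{\Pi}^T\bs{\Pi}\bv{y} - \bv{x}^T\bv{y}\right)^2\right] \leq \frac{C}{k}\,\|\bv{x}\|_2^2\,\|\bv{y}\|_2^2
\end{align*}
for an absolute constant $C$. Unbiasedness is immediate from $\E[\bs{\Pi}^T\bs{\Pi}] = \bv{I}_n$, which holds for each of the (properly scaled) distributions in the statement. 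For the variance, I would work with the degree-two \emph{JL moment property}: assuming $\E\big[(\|\bs{\Pi}\bv{z}\|_2^2 - \|\bv{z}\|_2^2)^2\big] \le \frac{C'}{k}\|\bv{z}\|_2^4$ for all $\bv{z}\in\R^n$, one passes from norms to inner products by first normalizing $\bv{x},\bv{y}$ to unit vectors (which both sides of the variance inequality permit) and then applying the polarization identity $\bv{x}^T\bv{y} = \tfrac14\big(\|\bv{x}+\bv{y}\|_2^2 - \|\bv{x}-\bv{y}\|_2^2\big)$ together with $(a-b)^2 \le 2a^2 + 2b^2$; this costs only a constant factor. The JL moment property is in turn checked family by family: for Gaussian and random-sign $\bs{\Pi}$ it is a short fourth-moment computation using independence of the entries (for Gaussians, $\|\bs{\Pi}\bv{z}\|_2^2$ is $\frac{\|\bv{z}\|_2^2}{k}$ times a $\chi^2_k$, giving $C'=2$), and for CountSketch it follows from $2$-wise independence of the bucket hash and of the random signs after expanding $\|\bs{\Pi}\bv{z}\|_2^2$ and grouping terms by hash collisions.

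Substituting the per-pair bound back into the double sum yields $\E\big[\|(\bs{\Pi}\bv{A})^T(\bs{\Pi}\bv{B}) - \bv{A}^T\bv{B}\|_F^2\big] \le \frac{C}{k}\sum_{i,j}\|\bv{a}_i\|_2^2\|\bv{b}_j\|_2^2 = \frac{C}{k}\|\bv{A}\|_F^2\|\bv{B}\|_F^2$; note this step uses only linearity of expectation, so no independence across the $d\cdot m$ pairs is needed. Markov's inequality then shows that for $k = C/(\epsilon^2\delta)$ the squared error exceeds $\epsilon^2\|\bv{A}\|_F^2\|\bv{B}\|_F^2$ with probability at most $\delta$, and taking square roots gives the claimed bound. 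The size statement is immediate: $\bs{\Pi}\bv{A}\in\R^{k\times d}$ and $\bs{\Pi}\bv{B}\in\R^{k\times m}$ store $kd = O(d/\epsilon^2\delta)$ and $km = O(m/\epsilon^2\delta)$ numbers, which beats the $nd$ and $nm$ entries of dense $\bv{A},\bv{B}$ exactly when $\frac{1}{\epsilon^2\delta} < n$. The main obstacle is the per-pair variance bound: getting it uniformly for all the listed sketch families essentially forces the JL-moment abstraction, and one has to take care that polarization loses only constants and that CountSketch --- being coordinate-dependent rather than rotationally invariant --- needs its own elementary but slightly fiddly moment calculation.
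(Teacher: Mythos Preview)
Your outline is correct and is essentially the standard argument from the cited references. Note, however, that the paper does not give its own proof of this statement: it is recorded as a \emph{Fact} with citations to \citep{Sarlos:2006,KaneNelson:2014,CohenNelsonWoodruff:2016} and is used only as background against which the paper's sampling-based results are compared. So there is no in-paper proof to match against; what you have written is precisely the route taken in the cited literature (entrywise second-moment bound via the degree-two JL moment property, summed and finished with Markov's inequality).
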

The above result can be strengthened if additional assumptions are made on $\bv{A}$ and $\bv{B}$. For example, a tighter bound is possible if the matrices are low-rank or nearly low rank \citep{CohenNelsonWoodruff:2016}. However, if no additional assumptions are made on the matrices' spectra, then \Cref{fact:sarlos} is the best result known. 

\subsection{Sampling Based Matrix Product Approximation}

The goal of this work is to provide an alternative approach to matrix product sketching that improves on \Cref{fact:sarlos} in the important setting where $\bv{A}$ and $\bv{B}$ are \emph{sparse}, and often provides better sketches even for dense matrices. To do so, we build on another classical approach for using randomization to speed up matrix-matrix multiplication: subsampling. A seminal paper by Drineas, Kannan, and Mahoney \citep{DrineasKannan:2001,DrineasKannanMahoney:2006} proves that it is possible to sample and reweight $O(1/\epsilon^2\delta)$ rows of $\bv{A}$ and $\bv{B}$ so that the product of the subsampled matrices $\tilde{\bv{A}}$ and $\tilde{\bv{B}}$ satisfies
$\|\tilde{\bv{A}}^T\tilde{\bv{B}} - \bv{A}^T\bv{B}\|_F \leq \epsilon \|\bv{A}\|_F\|\bv{B}\|_F,
$
with probability at least $1-\delta$. 

Notably, this approximation guarantee exactly matches \Cref{fact:sarlos}. However, since $\tilde{\bv{A}}$ and $\tilde{\bv{B}}$ consist of a \emph{subsample} of rows from $\bv{A}$ and $\bv{B}$, they can be much more compact to store than $\bs{\Pi}\bv{A}$ and $\bs{\Pi}\bv{B}$. For example, if $\bv{A}$ and $\bv{B}$ have at most $s$ non-zeros per row, $\tilde{\bv{A}}$ and $\tilde{\bv{B}}$ take $O(s/\epsilon^2)$ space to store, instead of $O(d/\epsilon^2)$ and $O(m/\epsilon^2)$, respectively. 

Importantly, however, the row subsamples $\tilde{\bv{A}}$ and $\tilde{\bv{B}}$ guaranteed by \cite{DrineasKannanMahoney:2006} \emph{cannot be computed in the sketching setting}. The challenge is that, to obtain a theoretical accuracy bound, rows must be sampled with {non-uniform} probabilities. Intuitively, if $\bv{A}$'s $i^\text{th}$ row $\bv{A}_i \in \R^{d}$ and $\bv{B}$'s $i^\text{th}$ row $\bv{B}_i\in \R^{m}$ have large magnitude, they contribute more to the matrix product $\bv{A}^T\bv{B}$, which can be written as a sum of rank-one outerproducts: $ \bv{A}^T\bv{B} = \sum_{i=1}^n \bv{A}_i\bv{B}_i^T$. So, high-magnitude rows must be sampled with higher probability. The original analysis from \cite{DrineasKannanMahoney:2006}  suggests sampling with probabilities proportional to $\sim \|\bv{A}_i\|_2\|\bv{B}_i\|_2$, where $\|\bv{x}\|_2$ denotes the Euclidean norm of a vector $\bv{x}$. While these probabilities were shown to satisfy an optimal variance property, they cannot be computed without access to \emph{both} $\bv{A}$ and $\bv{B}$, which is impossible in the sketching setting, since the sketches $\mathcal{S}(\bv{A})$ and $\mathcal{S}(\bv{B})$ must be computed independently of each other. It can be shown that it also suffices to use probabilities proportional to either $\sim \|\bv{A}_i\|_2^2$ or $\sim \|\bv{B}_i\|_2^2$ (so, only taking one matrix into account), but the choice must be consistent, so the issue remains that either the machine sketching $\bv{A}$ or the machine sketching $\bv{B}$ does not know the right probabilities. 

We emphasize that this issue cannot simply be resolved by communicating probabilities between the processes computing the matrix sketches (which would be relatively inexpensive). In particular, the applications we consider in Section \ref{sec:applications} require computing all pairwise matrix products between two sets of matrices $\bv{A}^1, \ldots, \bv{A}^q$ and $\bv{B}^1, \ldots, \bv{B}^p$. Using a single collection of sketches $\mathcal{S}(\bv{A}^1), \ldots, \mathcal{S}(\bv{A}^q), \mathcal{S}(\bv{B}^1), \ldots, \mathcal{S}(\bv{B}^p)$. Even if probabilites could be communicated, it is not clear what choice of probabilities should be used to make all pairs of sketches compatible.

\subsection{Our Results}
Our main contribution is to show that, despite the limitations above, sampling can in fact be applied effectively in the sketching setting by drawing on  \emph{coordinated random sampling} methods. Such methods include MinHash \citep{BroderCharikarFrieze:1998,ManasseMcSherryTalwar:2010,Ioffe:2010}, the $k$-minimum values (KMV) sketch \citep{BeyerHaasReinwald:2007}, conditional random sampling \citep{}{LiChurchHastie:2006}, and coordinated variants of PPSWOR sampling \citep{CohenKaplan:2007, CohenKaplan:2013}. We refer the reader to the recent survey of \cite{Cohen:2023} for a more complete review of prior work. 

The idea behind coordinated sampling is to use a shared random seed to draw samples on two different machines that are likely to contain a significant number of shared indices, even if the exact same sampling probabilities are not used. In our setting, $\bv{A}$'s rows will be sampled using probabilities proportional to $\|\bv{A}_i\|_2^2$ and $\bv{B}$'s rows with probabilities proportional to $\|\bv{B}_i\|_2^2$. Our samples $\mathcal{S}(\bv{A})$ and $\mathcal{S}(\bv{B})$ are only likely to contain $\bv{A}_i$ and $\bv{B}_i$ for a shared index $i$ (which can then be used to estimate $\bv{A}^T\bv{B}$ via the sum $\bv{A}^T\bv{B} = \sum_{i=1}^n \bv{A}_i\bv{B}_i^T$) if \emph{both} $\|\bv{A}_i\|_2^2$ and $\|\bv{B}_i\|_2^2$ are large. Fortunately, it turns out that this suffices to prove a bound equivalent to \Cref{fact:sarlos}. 

Formally, we use a method for coordinated sampling without replacement called Priority Sampling \citep{Ohlsson:1998,DuffieldLundThorup:2004,DuffieldLundThorup:2007} to prove the following main theoretical result:

\begin{theorem}[Main Result]
\label{thrm:main} Consider $\bv{A}\in \R^{n \times d}$, $\bv{B}\in \R^{n \times m}$, and any $\epsilon, \delta \in (0,1)$.
There is a sketching procedure (\Cref{alg:priority_sampling}) that constructs sketches $\mathcal{S}(\bv{A})$ and $\mathcal{S}(\bv{B})$ consisting of at most $k = \frac{2/\delta}{\epsilon^2} + 1$ rows from $\bv{A}$ and $\bv{B}$, and there is a corresponding estimation procedure (\Cref{{alg:approximate_matrix_multiplication}}) that,  using the information in these sketches, returns an estimate $\bv{W}$ such that, with probability $1-\delta$,
\begin{align*}
\|\bv{W} - \bv{A}^T\bv{B}\|_F \leq \epsilon \|\bv{A}\|_F\|\bv{B}\|_F.
\end{align*}
\end{theorem}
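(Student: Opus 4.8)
\noindent\emph{Plan.} The plan is to run \emph{coordinated priority sampling} with sampling weights equal to the squared row norms. Using the shared seed, draw i.i.d.\ $u_1,\dots,u_n\sim\mathrm{Unif}(0,1)$. The process holding $\bv{A}$ gives row $i$ priority $\|\bv{A}_i\|_2^2/u_i$, keeps the $k=\tfrac{2/\delta}{\epsilon^2}+1$ rows of largest priority (index set $S_A$), and records the threshold $\tau_A$, defined as the $(k+1)$-st largest priority; $\mathcal{S}(\bv{A})$ stores these rows, their indices, and $\tau_A$. The process holding $\bv{B}$ does the symmetric thing with weights $\|\bv{B}_i\|_2^2$ and the \emph{same} $u_i$'s. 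Given the two sketches, output
\[
\bv{W}\;=\;\sum_{i\in S_A\cap S_B}\frac{1}{p_i}\,\bv{A}_i\bv{B}_i^T,
\qquad
p_i\;=\;\min\!\left\{1,\ \frac{\|\bv{A}_i\|_2^2}{\tau_A},\ \frac{\|\bv{B}_i\|_2^2}{\tau_B}\right\}.
\]
Coordination is the whole point: since the $u_i$ are shared, $i\in S_A\cap S_B$ is likely precisely when both $\|\bv{A}_i\|_2$ and $\|\bv{B}_i\|_2$ are large, which is exactly when the rank-one term $\bv{A}_i\bv{B}_i^T$ is heavy in $\bv{A}^T\bv{B}=\sum_i\bv{A}_i\bv{B}_i^T$.

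First I would show $\E[\bv{W}]=\bv{A}^T\bv{B}$. Fix $i$ and condition on $\{u_\ell\}_{\ell\ne i}$. Let $\tau^A_{-i}$ (resp.\ $\tau^B_{-i}$) be the $k$-th largest priority among the other $n-1$ rows on the $\bv{A}$-side (resp.\ $\bv{B}$-side); these are functions of $\{u_\ell\}_{\ell\ne i}$ only, and on the event $\{i\in S_A\}$ one has $\tau_A=\tau^A_{-i}$, likewise for $B$. Hence $i\in S_A\cap S_B$ iff $u_i<\theta_i:=\min\{\|\bv{A}_i\|_2^2/\tau^A_{-i},\ \|\bv{B}_i\|_2^2/\tau^B_{-i}\}$, and on that event $p_i=\min\{1,\theta_i\}$, so $\E[\tfrac{1}{p_i}\1[i\in S_A\cap S_B]\mid\{u_\ell\}_{\ell\ne i}]=\Pr[u_i<\theta_i]/\min\{1,\theta_i\}=1$; summing over $i$ gives unbiasedness. (Zero rows and the case $n\le k$ are handled by the conventions $x/0=+\infty$, $\tau=0$.)

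For the variance I would route through an easier \emph{independent} surrogate. Replace the priority thresholds by the fixed values $t_A=\tfrac{\delta\epsilon^2}{2}\|\bv{A}\|_F^2$, $t_B=\tfrac{\delta\epsilon^2}{2}\|\bv{B}\|_F^2$, include $i$ in $\tilde S_A$ iff $u_i<\|\bv{A}_i\|_2^2/t_A$ and in $\tilde S_B$ iff $u_i<\|\bv{B}_i\|_2^2/t_B$, and form the analogous estimator $\tilde{\bv W}$ with $\tilde p_i=\min\{1,\|\bv{A}_i\|_2^2/t_A,\|\bv{B}_i\|_2^2/t_B\}$. Now the events $\{i\in\tilde S_A\cap\tilde S_B\}$ are independent across $i$, so $\tilde{\bv W}$ is unbiased and, using $\E[(\tfrac{1}{\tilde p_i}\1[i\in\tilde S_A\cap\tilde S_B]-1)^2]=\tfrac{1}{\tilde p_i}-1\le t_A/\|\bv{A}_i\|_2^2+t_B/\|\bv{B}_i\|_2^2$,
\[
\E\|\tilde{\bv W}-\bv{A}^T\bv{B}\|_F^2=\sum_i\|\bv{A}_i\|_2^2\|\bv{B}_i\|_2^2\Big(\tfrac{1}{\tilde p_i}-1\Big)\le\sum_i\big(\|\bv{B}_i\|_2^2\,t_A+\|\bv{A}_i\|_2^2\,t_B\big)=t_A\|\bv{B}\|_F^2+t_B\|\bv{A}\|_F^2=\delta\epsilon^2\|\bv{A}\|_F^2\|\bv{B}\|_F^2,
\]
while $\E|\tilde S_A|\le\|\bv{A}\|_F^2/t_A=2/(\delta\epsilon^2)$ and similarly for $\tilde S_B$. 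To transfer this bound to the actual estimator $\bv{W}$ --- which keeps a \emph{deterministic} $k=2/(\delta\epsilon^2)+1$ rows per side --- I would invoke a coordinated analogue of Szegedy's theorem for priority sampling: keeping $k$ items by priority never incurs larger (subset-sum) variance than independent probability-proportional-to-size sampling with $k-1$ expected samples. Applying this coordinate-wise to the entries of $\bv{A}^T\bv{B}$ gives $\E\|\bv{W}-\bv{A}^T\bv{B}\|_F^2\le\E\|\tilde{\bv W}-\bv{A}^T\bv{B}\|_F^2\le\delta\epsilon^2\|\bv{A}\|_F^2\|\bv{B}\|_F^2$; this ``$k$ versus $k-1$'' gap is exactly the ``$+1$'' in the theorem. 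Markov's inequality applied to $\|\bv{W}-\bv{A}^T\bv{B}\|_F^2$ then gives $\Pr[\|\bv{W}-\bv{A}^T\bv{B}\|_F>\epsilon\|\bv{A}\|_F\|\bv{B}\|_F]\le\delta$.

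The main obstacle is the transfer step: extending the priority-versus-IPPS variance comparison to the \emph{coordinated} setting, where two priority samples are coupled through the shared randomness and $p_i$ mixes both thresholds. Even the single-sample comparison (Szegedy's proof of the Duffield--Lund--Thorup conjecture) is delicate; the coupled version needs the argument to rely only on the abstract structure ``row $i$ is kept iff $u_i$ falls below a threshold determined by the \emph{other} rows' randomness,'' which the coordinated scheme does satisfy but which must be checked carefully, especially for the row pinned at each threshold. A more computational alternative would bound $\E\|\bv{W}-\bv{A}^T\bv{B}\|_F^2$ directly: its diagonal part $\sum_i\|\bv{A}_i\|_2^2\|\bv{B}_i\|_2^2\,\E[Z_i^2]$ (with $Z_i=\tfrac{1}{p_i}\1[i\in S_A\cap S_B]-1$) is controlled by the same leave-one-out computation plus a bound $\E[\tau^A_{-i}]\le\|\bv{A}\|_F^2/(k-1)$ on the expected order statistic, but the off-diagonal part has sign-indefinite coefficients $(\bv{A}_i^T\bv{A}_j)(\bv{B}_i^T\bv{B}_j)$ and so cannot simply be discarded even though the covariances $\E[Z_iZ_j]$ are nonpositive --- making this route essentially as hard as the first.
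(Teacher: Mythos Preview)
Your setup, the unbiasedness argument, and the final Markov step are all correct and match the paper. The gap is in the variance step, but not where you think: your Route~B is in fact the paper's route, and the obstacle you flag dissolves once you push the leave-one-out trick one step further.

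You assert that the cross terms $\E[Z_iZ_j]$ are merely nonpositive, which would indeed be useless against sign-indefinite coefficients $(\bv{A}_i^T\bv{A}_j)(\bv{B}_i^T\bv{B}_j)$. But they are \emph{exactly zero}. Define, in your notation, $\tau^{A}_{-i,-j}$ to be the $(k-1)$-st largest $\bv{A}$-side priority among $\ell\notin\{i,j\}$, and likewise $\tau^{B}_{-i,-j}$. On the event $\{i,j\in S_A\cap S_B\}$ one has $\tau_A=\tau^{A}_{-i,-j}$ and $\tau_B=\tau^{B}_{-i,-j}$, so $p_i$ and $p_j$ can be rewritten using the leave-two-out thresholds. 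Conditioning on $\{u_\ell\}_{\ell\neq i,j}$, the event $\{i,j\in S_A\cap S_B\}$ is exactly
\[
\Bigl\{u_i<\min\Bigl(\tfrac{\|\bv{A}_i\|_2^2}{\tau^A_{-i,-j}},\tfrac{\|\bv{B}_i\|_2^2}{\tau^B_{-i,-j}}\Bigr)\Bigr\}\cap\Bigl\{u_j<\min\Bigl(\tfrac{\|\bv{A}_j\|_2^2}{\tau^A_{-i,-j}},\tfrac{\|\bv{B}_j\|_2^2}{\tau^B_{-i,-j}}\Bigr)\Bigr\},
\]
which factors over the independent $u_i,u_j$. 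This gives $\E\bigl[\tfrac{\mathbbm{1}_i}{p_i}\tfrac{\mathbbm{1}_j}{p_j}\bigr]=1=\E\bigl[\tfrac{\mathbbm{1}_i}{p_i}\bigr]\E\bigl[\tfrac{\mathbbm{1}_j}{p_j}\bigr]$, hence $\E[Z_iZ_j]=0$. The Frobenius error then collapses to the diagonal sum $\sum_i\|\bv{A}_i\|_2^2\|\bv{B}_i\|_2^2\,\E[Z_i^2]$, and your own leave-one-out bound $\E[\tau^A_{-i}]\le\|\bv{A}\|_F^2/(k-1)$ (together with its $\bv{B}$ counterpart and $\max(a,b)\le a+b$) gives $\E\|\bv{W}-\bv{A}^T\bv{B}\|_F^2\le\tfrac{2}{k-1}\|\bv{A}\|_F^2\|\bv{B}\|_F^2$, which is precisely the ``$+1$'' you anticipated. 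No Szegedy-type comparison theorem is needed; Route~A is a detour.
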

Theorem \ref{thrm:main} matches the guarantee of the sampling method from \cite{DrineasKannanMahoney:2006} up to a constant, albeit our method computes $\mathcal{S}(\bv{A})$ and $\mathcal{S}(\bv{B})$ completely independently from each other. As such, we match the state-of-the-art \Cref{fact:sarlos} guarantee for linear sketching in the worst case, and improve on it whenever $\bv{A}$ and $\bv{B}$ are sparse. For example, in our experiments in \Cref{sec:experiments}, we consider some applications involving matrices with only $2\%$ of entries in each row non-zero. For these applications, storing a subsample of size $k = O(1/\delta \epsilon^2)$ vs. a linear sketch $\Pi \bv{A}$ with height $k = O(1/\delta \epsilon^2)$ translates to a 50x savings in space. 

Priority Sampling and related methods have recently been leveraged to give new sketching procedures for estimating \emph{inner products}, which improve on linear sketching methods like Johnson-Lindenstrauss projection for that problem \citep{BessaDaliriFreireMusco:2023,DaliriFreireMusco:2024,dalirisampling:2024}. 
Inner products represent a special case of matrix products when $d = m = 1$. Our work significantly extends these results by addressing general matrix multiplication. In doing so, we encounter several technical challenges, including the fact that Priority Sampling—being a without-replacement sampling procedure—generates non-i.i.d. row samples from $\bv{A}$ and $\bv{B}$. We address these challenges in \Cref{sec:prioritysampling}.

\subsection{Example Applications}
\label{sec:applications}
As mentioned, sketching methods are most useful in disributed computing environments, or in settings where we wish to compute many pairs of matrix-matrix products from a fixed collection of sketches. 

\paragraph{Multi-vector Retrieval.} One such setting arrise in the  ``vector set search'' or ``multi-vector retrieval'' problem, which has received recent attention \citep{EngelsColemanLakshman:2023,DhulipalaHadianJayaram:2024}. This problem generalizes standard vector similarity search to matrices: we have a database of matrices $\bv{A}^1, \ldots, \bv{A}^q$ (each that represents a document or media item) and another matrix $\bv{B}$ that represent a query. These matrices can be viewed as collections of column vectors, hence the name ``vector set search''.
The goal is to find $\min_i d({\bv{A}^i}^T\bv{B})$, where $d$ is some distance function that depends on the matrix-matrix product ${\bv{A}^i}^T\bv{B}$. Approximate matrix-multiplication can be used to speed up the computation of ${\bv{A}^i}^T\bv{B}$, and thus the distance computation. For example, a Johnson-Lindenstrauss sketch is used in \citep{DhulipalaHadianJayaram:2024}. Here, the sketching setting is key: $\bv{A}^1, \ldots, \bv{A}^q$  are preprocessed into sketches that are computed before the query $\bv{B}$ is issued and $\mathcal{S}(\bv{B})$ cannot be chosen to depend on any particular $\bv{A}^i$ or $\mathcal{S}(\bv{A}^i)$, as it will be used to estimate $\bv{B}$’s matrix product with all $q$ matrices $\bv{A}^1, \ldots, \bv{A}^q$.

\paragraph{Regression-based Dataset Search.}
Another motivation of our work is to develop efficient methods for dataset search and discovery, a problem that has received significant interest in recent years \cite{ChepurkoMarcusZgraggen:2020,CasteloRampinSantos:2021,LiuChaiLuo:2022, IonescuHaiFragkoulis:2022}. 

In particular, suppose we have a data lake consisting of many datasets $\bv{A}^1, \ldots, \bv{A}^q$
and we want to support queries where a user provides a data vector $\bv{b}$ and the system returns all candidate datasets $\bv{A}^i$ that are \emph{predictive} of $\bv{b}$ . I.e., for which $\min_{\bv{x}}\|\bv{A}^i\bv{x}-\bv{b}\|_2$ is small. The sketching setting can be used to support such queries efficiently: $\bv{b}$ is sketched by the user and is sent to the dataset search system. It is then compared to precomputed sketches for each of $\bv{A}^1, \ldots, \bv{A}^q$ to approximate $\min_{\bv{x}}\|\bv{A}^i\bv{x}-\bv{b}\|_2$. 

It turns out that this problem can be solved with a modified version of our matrix-product sketching method. Concretely, restricting our attention to a single $\bv{A}\in \R^{n\times d}$ and vector $\bv{b}\in \R^{n}$, our goal is to find $\tilde{\bv{x}}$ which is a near minimizer of the standard least squares problem: $\min_{\bv{x}} \|\bv{A}\bv{x} - \bv{b}\|_2^2$. The optimal $\bv{x}$ has the form $\bv{x}^* = (\bv{A}^T\bv{A})^{-1}\bv{A}^T\bv{b}$, where $\bv{A}^T\bv{A}$ is a relatively small, $d\times d$ matrix. So, the challenge is approximating the matrix-vector product $\bv{A}^T\bv{b}$ using compact sketches.

Approximating $\bv{A}^T\bv{b}$ directly using \Cref{thrm:main} does not suffice, as to ensure an accurate $\bv{\tilde{x}}$, we need small error with respect to a different norm than the standard Frobenius norm. Instead, we introduce a variant of \Cref{alg:priority_sampling} that collects row samples from $\bv{A}$ based on the matrix's \emph{statistical leverage scores}. Entries from $\bv{b}$ are sampled based on their squared magnitude. Our main result is as follows:

\begin{theorem}[Sketched Regression]
\label{thrm:regression}
There is a procedure that constructs sketches $\mathcal{S}(\bv{A})$ and $\mathcal{S}(\bv{b})$ consisting of $O(d/\epsilon)$ row samples from $\bv{A} \in \R^{n\times d}$ and $\bv{b}\in \R^n$ such that, using only the information in those sketches, we can compute $\tilde{\bv{x}}\in \R^{d}$ satisfying, with probability at least $99/100$,
\begin{align*}
\|\bv{A}\tilde{\bv{x}}-\bv{b}\|_2^2 \leq \|\bv{A}\bv{x}^*-\bv{b}\|_2^2 + \epsilon \|\bv{b}\|_2^2.
\end{align*}
\end{theorem}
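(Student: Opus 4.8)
The plan is to reduce \Cref{thrm:regression} to an application of \Cref{thrm:main} for one cleverly chosen pair of matrices. Write $\bv{x}^* = (\bv{A}^T\bv{A})^+\bv{A}^T\bv{b}$ for the minimum-norm least-squares solution, let $\bv{U}\in\R^{n\times r}$ be an orthonormal basis for the column span of $\bv{A}$ (so $r=\mathrm{rank}(\bv{A})\le d$, the squared row norms $\|\bv{U}_i\|_2^2$ are exactly the leverage scores of $\bv{A}$, and $\|\bv{U}\|_F^2=r\le d$), and let $\bv{P}=\bv{U}\bv{U}^T$. The first step is the Pythagorean identity: for any $\tilde{\bv{x}}$, the vector $\bv{A}(\tilde{\bv{x}}-\bv{x}^*)$ lies in the column span of $\bv{A}$ while $\bv{b}-\bv{A}\bv{x}^*=(\bv{I}-\bv{P})\bv{b}$ is orthogonal to it, so $\|\bv{A}\tilde{\bv{x}}-\bv{b}\|_2^2=\|\bv{A}(\tilde{\bv{x}}-\bv{x}^*)\|_2^2+\|\bv{A}\bv{x}^*-\bv{b}\|_2^2$. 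Hence it suffices to produce $\tilde{\bv{x}}$ with $\|\bv{A}(\tilde{\bv{x}}-\bv{x}^*)\|_2^2\le\epsilon\|\bv{b}\|_2^2$.

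Next I would specify the sketch and estimator. The machine holding $\bv{A}$ stores (i) the $d\times d$ matrix $\bv{A}^T\bv{A}$ — a fixed, coordination-free overhead that does not grow with $n$ — and (ii) a run of the priority-sampling primitive of \Cref{alg:priority_sampling} on the rows of $\bv{A}$, but with weights equal to the (approximate) leverage scores $\ell_i=\bv{A}_i(\bv{A}^T\bv{A})^+\bv{A}_i^T$ in place of squared row norms, keeping $k=O(d/\epsilon)$ rows together with their leverage scores and the priority threshold. The machine holding $\bv{b}$ runs the same primitive on the entries of $\bv{b}$ with weights $b_i^2$ and the same shared seed. From the shared sampled indices we form the Horvitz--Thompson estimate $\tilde{\bv{y}}$ of $\bv{A}^T\bv{b}=\sum_i\bv{A}_i b_i$ exactly as in \Cref{thrm:main}'s estimation procedure, and output $\tilde{\bv{x}}=(\bv{A}^T\bv{A})^+\tilde{\bv{y}}$.

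The heart of the argument is the algebraic identity that turns the regression error into a matrix-product error for the pair $(\bv{U},\bv{b})$. Because $\tilde{\bv{y}}-\bv{A}^T\bv{b}=\sum_i (z_i-1)\bv{A}_i^T b_i$ is a signed reweighting of the rows of $\bv{A}$ (with $z_i$ the inclusion-reweighting factor, equal to zero off the sample and satisfying $\E z_i=1$), and a short SVD computation gives $\bv{A}(\bv{A}^T\bv{A})^+\bv{A}_i^T=\bv{U}\bv{U}_i^T$, we obtain $\bv{A}(\tilde{\bv{x}}-\bv{x}^*)=\bv{U}\big(\sum_i (z_i-1)\bv{U}_i^T b_i\big)$, and hence, since $\bv{U}$ has orthonormal columns, $\|\bv{A}(\tilde{\bv{x}}-\bv{x}^*)\|_2=\|\widetilde{\bv{U}^T\bv{b}}-\bv{U}^T\bv{b}\|_2$, where $\widetilde{\bv{U}^T\bv{b}}=\sum_i z_i\bv{U}_i^T b_i$ is exactly the coordinated-sampling estimate of $\bv{U}^T\bv{b}$ produced by our samples — the quantity controlled by \Cref{thrm:main}, since we sampled $\bv{A}$'s rows with probabilities $\propto \ell_i=\|\bv{U}_i\|_2^2$ and $\bv{b}$'s entries with probabilities $\propto b_i^2$. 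Invoking \Cref{thrm:main} on $(\bv{U},\bv{b})$ with accuracy parameter $\epsilon_0=\sqrt{\epsilon/d}$ and failure probability $1/100$, and using $\|\bv{U}\|_F^2\le d$, gives $\|\widetilde{\bv{U}^T\bv{b}}-\bv{U}^T\bv{b}\|_2\le\epsilon_0\|\bv{U}\|_F\|\bv{b}\|_2\le\sqrt{\epsilon}\,\|\bv{b}\|_2$ with probability $99/100$, using $k=O(1/\epsilon_0^2)=O(d/\epsilon)$ samples; squaring and substituting into the Pythagorean identity yields the claim.

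The step I expect to be the main obstacle is handling the normal-equations matrix $\bv{A}^T\bv{A}$. A naive constant-factor spectral approximation of it is \emph{not} enough: propagating such an approximation through $\tilde{\bv{x}}=(\widetilde{\bv{A}^T\bv{A}})^+\tilde{\bv{y}}$ only yields error $O(\|\bv{b}\|_2)$, not $O(\sqrt{\epsilon}\,\|\bv{b}\|_2)$, so one must either store $\bv{A}^T\bv{A}$ exactly (the route above) or take $O((d\log d)/\epsilon)$ leverage-score samples to obtain a $(1\pm\sqrt{\epsilon})$-spectral approximation and absorb its perturbation into a more careful version of the identity. A second, more technical point is confirming that priority sampling's without-replacement, non-i.i.d.\ samples still give the second-moment bound needed for the induced estimate of $\bv{U}^T\bv{b}$; this, however, is precisely the analysis already carried out for \Cref{thrm:main} in \Cref{sec:prioritysampling} and can be invoked as a black box. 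Finally, I would note that sampling $\bv{b}$ by $b_i^2$ rather than by the squared residual entries $(\bv{b}-\bv{A}\bv{x}^*)_i^2$ — which are not knowable when $\mathcal{S}(\bv{b})$ is built — is harmless exactly because the target is the additive error $\epsilon\|\bv{b}\|_2^2$ rather than a multiplicative $(1+\epsilon)$ bound, and that the leverage scores used to sketch $\bv{A}$ depend only on $\bv{A}$, so the no-coordination requirement is respected.
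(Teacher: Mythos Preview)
Your proposal is correct and follows essentially the same route as the paper: both start from the Pythagorean identity, sample $\bv{A}$'s rows by leverage scores and $\bv{b}$'s entries by squared magnitudes via the coordinated priority-sampling primitive, and reduce to \Cref{thrm:main} applied to a matrix whose squared row norms are the leverage scores and whose squared Frobenius norm is $r\le d$. The only cosmetic difference is that you phrase the reduction in terms of the orthonormal basis $\bv{U}$ (so the target product is $\bv{U}^T\bv{b}$), while the paper phrases it in terms of the projection $\bv{P}=\bv{A}(\bv{A}^T\bv{A})^{-1}\bv{A}^T$ (target product $\bv{P}\bv{b}$); your identity $\bv{A}(\bv{A}^T\bv{A})^+\bv{A}_i^T=\bv{U}\bv{U}_i^T$ is exactly what makes the two equivalent, and your discussion of storing $\bv{A}^T\bv{A}$ exactly versus a $(1\pm\sqrt{\epsilon})$-spectral approximation mirrors the paper's ``optimized method'' paragraph.
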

Sketching algorithms for regression have been studied in prior work. For the problem above, the best existing result is based on linear sketching (e.g., Johnson-Lindenstrauss projection). Linear sketching methods achieve the same guarantee as \Cref{thrm:regression} with a sketch of size $O(d^2/\epsilon)$ for $\bv{A}$ and a sketch of size $O(d/\epsilon)$ for $\bv{b}$ (specifically, the $d\times O(d/\epsilon)$ matrix $\bs{\Pi}\bv{A}$ and the vector $\bs{\Pi}\bv{b}$) \cite{Sarlos:2006,Woodruff:2014}. \Cref{thrm:regression} improves on these bounds when $\bv{A}$ is sparse. In particular, if $\bv{A}$ has $s \leq d$ non-zeros per row, we require a sketch of size $O(sd/\epsilon)$ for $\bv{A}$ and of size $O(d/\epsilon)$ for $\bv{b}$.\footnote{Linear sketching methods actually ensure a stronger guarantee: the additive error $\epsilon\|\bv{b}\|_2$ can be replaced with the residual $\epsilon\|\bv{A}\bv{x}^* - \bv{b}\|_2$, which is always smaller. In some applications, the difference is not significant. For example, in dataset search, most matrices $\bv{A}$ will be unrelated to $\bv{b}$, so we expect $\|\bv{A}\bv{x}^* - \bv{b}\|_2 \approx \|\bv{b}\|_2$. Additive error $\epsilon  \|\bv{b}\|_2^2$ should suffice to at least rule out bad candidates. However a nice question for future work is to understand if more compact sketches can be obtained when targeting the stronger residual error guarantee.}

\Cref{thrm:regression} is proven in \Cref{sec:regression}. We remark that leverage score sampling has already been widely applied to regression problems outside of the distributed sketching setting \citep{DrineasMahoneyMuthukrishnan:2006, CohenLeeMusco:2015,ChenPrice:2019a}. It is well known that, if the rows of $\bv{A}$ and $\bv{b}$ are sampled with probability proportional to the leverage scores of $\bv{A}$, then a guarantee matching \Cref{thrm:regression} holds as long as $O(d/\epsilon)$ samples are taken \cite{Sarlos:2006}. However, standard leverage score sampling cannot be applied in our sketching setting since $\bv{A}$'s leverage scores cannot be used when subsampling $\bv{b}$. 
Again, this is not an issue with simply needing to communicate the scores. We want a sketch of $\bv{b}$ that is compatible with each matrix in a collection $\bv{A}^1, \ldots, \bv{A}^q$, which might have very different leverage score distributions.
This challenge necessitates both a new algorithm and a new analysis. 

Our work builds on a recent line of work that uses sketching methods for efficient dataset search in general. For example, sketching methods for estimating inner products have been applied to finding individual columns in a datalake that are highly correlated with a given query vector $\bv{b}$
\citep{SantosBessaChirigati:2021,SantosBessaMusco:2022,dalirisampling:2024}. Our sketching methods for regression allow for more advanced search queries that move beyond pairwise correlation.

\subsection{Notation and Preliminaries}
Before proceeding, we briefly review notation used throughout the paper.

\noindent \textbf{Linear Algebra Notation.}
The $i^\text{th}$ row of a matrix $\bv{A}$ is denoted by $\bv{A}_i$. The entry in the $i^\text{th}$ row and $j^\text{th}$ column of $\bv{A}$ is denoted by $\bv{A}_{i,j}$. For a vector $\bv{x}$, $\bv{x}_i$ denotes the $i^\text{th}$ entry. We use $\|\bv{x}\|_2$ to denote the standard Euclidean norm of a vector $\bv{x}$ and $\|\bv{A}\|_F$ to denote the Frobenius norm of of matrix $\bv{A}$.  The transpose of a matrix $\bv{A}$ is denoted by $\bv{A}^T$, and the inverse of $\bv{A}$ is denoted by $\bv{A}^{-1}$, provided it exists. The zero vector is denoted by $\bv{0}$, with dimension clear from context.

\noindent \textbf{Other Notation.}
The expected value of a random variable $X$ is denoted by $\E[X]$, and its variance is denoted by $\Var(X)$.
We use the notation $[n]$ to represent the set $\left\{1, \ldots, n\right\}$.

\section{Matrix Product Sketching with Priority Sampling}
\label{sec:prioritysampling}
Our main approach to matrix product sketching is based on subsampling. We can write any matrix product $\bv{A}^T \bv{B}$ as a sum of outer-products $\bv{A^T} \bv{B} = \sum_{i=1}^n \bv{A}_i\bv{B}_i^T$. We will estimate this sum as $\sum_{i\in \mathcal{T}} w_i \bv{A}_i\bv{B}_i^T$ where $\mathcal{T}$ is a small subset of $\{1, \ldots, n\}$ and $w_i$ is an appropriately chosen weight. Typically $\mathcal{T}$ is selected via importance sampling: indices $i$ that correspond to larger norm rows in $\bv{A}$ or $\bv{B}$ are sampled with higher probability \citep{DrineasKannanMahoney:2006}. The challenge in the sketching setting is that $\bv{A}$ and $\bv{B}$ must be sampled independently from each other, without knowledge of the other matrices row norms. 

We address this issue by using a coordinated sampling technique known as Priority Sampling, which has been widely used for subsampling data streams \cite{DuffieldLundThorup:2007}, and more recently for subsampling vectors for inner product estimation \citep{dalirisampling:2024}. Pseudocode for the method is included  in \Cref{alg:priority_sampling}. To give better intuition for the method, we informally describe another closely related algorithm called Threshold Sampling, which gives the same guarantees as Priority Sampling for our problem, but has the disadvantage of producing a sketch whose size can only be bounded in expectation.

Threshold Sampling works as follows: 1) using shared random bits, we select a random hash function $h:\{1, \ldots, n\}\rightarrow [0,1]$ that assigns a uniformly random number between $[0,1]$ to any index $i$.\footnote{In practice, $h$ can be substituted with a pseudorandom function mapping to a large discrete subset of $[0,1]$. For simplicity, we assume access to a real-valued, perfect hash function, as is standard in the literature \citep{CormodeGarofalakisHaas:2011}}, 2) we collect in the sketch $\mathcal{S}(\bv{A})$ any row $\bv{A}_i$ for which $h(i) \leq k\cdot \|\bv{A}_i\|_2^2/\|\bv{A}\|_F^2$, and in the sketch $\mathcal{S}(\bv{B})$ any row $\bv{b}_i$ for which $h(i) \leq k\cdot \|\bv{B}_i\|_2^2/\|\bv{A}\|_F^2$. Equivalently, $\bv{A}_i$ is sampled if the reweighted hash value $h(i)/\|\bv{A}_i\|_2^2$ falls below a fixed \emph{threshold} $k/\|\bv{A}\|_F^2$ (and likewise for $\bv{B}_i$).

It is easy to see that each sketch contains $k$ rows in expectation. Moreover, since we use a shared hash function, it can be checked that, for any index $i$, we have that \emph{both} $\bv{A}_i \in \mathcal{S}(\bv{A})$ and $\bv{B}_i\in \mathcal{S}(\bv{B})$ with probability:
\begin{align*}
p_i = \min\left(1, k\cdot \|\bv{A}_i\|_2^2/\|\bv{A}\|_F^2, k\cdot \|\bv{B}_i\|_2^2/\|\bv{B}\|_F^2\right).
\end{align*}
Let $\mathcal{T}$ denote the set of indices that appear in both sketches. We return the unbiased estimate $\bv{W} = \sum_{i\in \mathcal{T}} \frac{1}{p_i}\bv{A}_i\bv{B}_i^T$. To show that this estimate is accurate, we can follow an analysis similar to the original paper on subsampled randomized matrix multiplication, which bounds the expected squared error $\E\|\bv{W} - \bv{A}^T\bv{B}\|_F^2$ before applying Markov's inequality \citep{DrineasKannanMahoney:2006}. The only difference is that we must show that it suffices to sample indices with probability proportional to the \emph{minimum} of $\|\bv{A}_i\|_2^2/\|\bv{A}\|_F^2$ and $\|\bv{B}_i\|_2^2/\|\bv{B}\|_F^2$ instead of the product of these numbers. Perhaps the fact that this suffices is intuitive: for the outerproduct $\bv{A}_i\bv{B}_i^T$ to make a significant contribution to $\bv{A}^T\bv{B}$, neither $\bv{A}_i$ nor $\bv{B}_i$ can have small magnitude. A full analysis of Threshold Sampling is given in \Cref{app:threshold_sampling}.

The method we propose, Priority Sampling, is almost identical to Threshold Sampling. However, instead of fixing the threshold $k/\|\bv{A}\|_F^2$, which leads to a random number of indices being sampled, we \emph{dynamically} set the threshold to collect exactly $k$ samples. Doing so does not change the method in spirit, but complicates the analysis since samples are no longer independent. 

\begin{figure}
\vspace{-1em}
\begin{algorithm}[H]
    \caption{Priority Sampling}
    \label{alg:priority_sampling}
    \begin{algorithmic}[1]
        \Require Matrix $\bv{A}$ of size $n\times d$, random seed $s$, number of row samples, $k$.
        \Ensure Sketch $\mathcal{S}(\bv{A}) = \{\mathcal{I}_{\bv{A}}, V_{\bv{A}}, \tau_{\bv{A}}\}$, where $\mathcal{I}_{\bv{A}}$ is a subset of row indices from $\{1, \ldots, n\}$ and $V_{\bv{A}}$ contains $\bv{A}_i$ for all $i\in \mathcal{I}_{\bv{A}}$.
        \algrule
        \State Use random seed $s$ to select a uniformly random hash function $h: \{1,..., n\}\rightarrow [0,1]$. 
        \State Initialize $\mathcal{I}_{\bv{A}}$ and $V_{\bv{A}}$ to be empty lists.
        \State Compute rank $R_i = \frac{h(i)}{\|\bv{A}_i\|_2^2}$ for all $i$ such that $\bv{A}_i \neq \bv{0}$.
        \State Set $\tau_{\bv{A}}$ equal to the $(k+1)^{\text{st}}$ smallest value $R_{i}$, or set $\tau_{\bv{A}} = \infty$ if $\bv{A}$ has $< k+1$ non-zero rows.
        \For{$i$ such that $\bv{A}_i \neq \bv{0}$}
                \If{$R_i < \tau_{\bv{A}}$}
              \State Append $i$ to $\mathcal{I}_{\bv{A}}$, append $\bv{A}_i$ to $V_{\bv{A}}$.
              \EndIf
        \EndFor
        \State \Return $\mathcal{S}(\bv{A}) = \{\mathcal{I}_{\bv{A}}, V_{\bv{A}},\tau_{\bv{A}}\}$
    \end{algorithmic}
    \vspace{-.2em}
\end{algorithm}
\vspace{-2em}
\end{figure}

\begin{figure}[t]
\vspace{-1em}
\begin{algorithm}[H]
    \caption{Approximate Matrix Multiplication}
    \label{alg:approximate_matrix_multiplication}
	\begin{algorithmic}[1]
		\Require Sketches $\mathcal{S}(\bv{A}) = \{\mathcal{I}_{\bv{A}}, V_{\bv{A}}, \tau_{\bv{A}}\}$, $\mathcal{S}(\bv{B}) = \{\mathcal{I}_{\bv{B}}, V_{\bv{B}}, \tau_{\bv{B}}\}$ constructed  by \Cref{alg:priority_sampling}.
		\Ensure Estimate $\bv{W}$ for $\bv{A}^T\bv{B}$.
		\algrule
        \State Compute $\mathcal{T} = \mathcal{I}_{\bv{A}} \cap \mathcal{I}_{\bv{B}}$. Note that for all $i\in \mathcal{T}$, $V_{\bv{A}}$ and $V_{\bv{B}}$ contain $\bv{A}_i$ and $\bv{B}_i$.
		\State \Return 
  \vspace{-1em}
  \begin{align*}
  \bv{W} = \sum_{i \in \mathcal{T}} \frac{\bv{A}_i\bv{B}_i^T}{\min(1, \|\bv{A}_i\|_2^2 \cdot \tau_{\bv{A}}, \|\bv{B}_i\|_2^2 \cdot \tau_{\bv{B}})}.
  \end{align*}
         \vspace{-.5em}
\label{alg:approximate_matrix_multiplication_summation}
	\end{algorithmic}
     \vspace{-.2em}
\end{algorithm}
\vspace{-2em}
\end{figure}

Nevertheless, drawing inspiration from a new, simple analysis of Priority Sampling for sampling numbers from a stream \citep{DaliriFreireMusco:2024}, we are able to prove the following bound:
\begin{theorem}\label{thm:main_priority}
Let $\bv{A}\in \R^{n\times d}$, $\bv{B}\in \R^{n\times m}$, and let $\mathcal{S}(\bv{A})=\{\mathcal{I}_{\bv{A}}, V_{\bv{A}}, \tau_{\bv{A}}\}$ and $\mathcal{S}(\bv{B})=\{\mathcal{I}_{\bv{B}}, V_{\bv{B}}, \tau_{\bv{B}}\}$ be sketches produced by \Cref{alg:priority_sampling} with input $k$ and a shared seed $s$. Suppose $\bv{W}$ is the approximate matrix of $\bv{A}^T \bv{B}$ calculated using \Cref{alg:approximate_matrix_multiplication} on these sketches. Then, $\E\left[\bv{W}\right] = \bv{A}^T\bv{B}$ and
\begin{align*}
\E\left[\|\bv{W} - \bv{A}^T\bv{B}\|^2_F\right] &\leq \frac{2}{k-1} \|\bv{A}\|_F^2\|\bv{B}\|_F^2.
\end{align*}
Additionally, $|\mathcal{I}_{\bv{A}}| \leq k$ and $|\mathcal{I}_{\bv{B}}| \leq k$. I.e., each sketch contains no more than $k$ rows from $\bv{A}$ and $\bv{B}$, respectively. If each matrix has at least $k$ non-zero rows, we have that $|\mathcal{I}_{\bv{A}}| = |\mathcal{I}_{\bv{B}}| = k$.
\end{theorem}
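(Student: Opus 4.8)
The plan is to dispatch the three assertions --- the deterministic size bounds, unbiasedness, and the second-moment bound --- in turn, the last carrying all the weight. The size bounds are immediate from \Cref{alg:priority_sampling}: by construction $\tau_{\bv A}$ is the $(k{+}1)$-st smallest of the ranks $R_i=h(i)/\|\bv A_i\|_2^2$ and $\mathcal{I}_{\bv A}$ retains exactly the $i$ with $R_i<\tau_{\bv A}$, so $|\mathcal{I}_{\bv A}|\le k$, with equality when $\bv A$ has at least $k{+}1$ nonzero rows since the ranks are almost surely distinct; symmetrically for $\bv B$. For unbiasedness I would fix $i$ with $\bv A_i\ne\bv 0$ and $\bv B_i\ne\bv 0$ (all other rows contribute $\bv 0$ to both $\bv W$ and $\bv A^T\bv B$), condition on $\{h(j)\}_{j\ne i}$, and introduce the leave-one-out thresholds $\tau_{\bv A}^{(-i)}$ and $\tau_{\bv B}^{(-i)}$ --- the $k$-th smallest of $\{h(j)/\|\bv A_j\|_2^2\}_{j\ne i}$ and of $\{h(j)/\|\bv B_j\|_2^2\}_{j\ne i}$, taken to be $\infty$ if fewer than $k$ such values exist. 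Two structural facts drive everything: (i) $i\in\mathcal{I}_{\bv A}$ iff $h(i)<\|\bv A_i\|_2^2\,\tau_{\bv A}^{(-i)}$, and likewise for $\bv B$, so since $h(i)$ is uniform on $[0,1]$ and independent of the conditioning, $\Pr[i\in\mathcal T\mid\{h(j)\}_{j\ne i}]=q_i:=\min(1,\,\|\bv A_i\|_2^2\tau_{\bv A}^{(-i)},\,\|\bv B_i\|_2^2\tau_{\bv B}^{(-i)})$; and (ii) on the event $i\in\mathcal T$ one has $\tau_{\bv A}=\tau_{\bv A}^{(-i)}$ and $\tau_{\bv B}=\tau_{\bv B}^{(-i)}$, because once $R_i$ sits among the $k$ smallest ranks the $(k{+}1)$-st smallest of all of them is just the $k$-th smallest of the rest. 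By (ii) the denominator $\min(1,\|\bv A_i\|_2^2\tau_{\bv A},\|\bv B_i\|_2^2\tau_{\bv B})$ in \Cref{alg:approximate_matrix_multiplication} equals $q_i$ on $\{i\in\mathcal T\}$, so the coefficient applied to $\bv A_i\bv B_i^T$ is exactly $\hat{c}_i:=\mathbbm{1}[i\in\mathcal T]/q_i$; by (i), $\E[\hat{c}_i\mid\{h(j)\}_{j\ne i}]=1$, and summing $\E[\hat{c}_i\bv A_i\bv B_i^T]=\bv A_i\bv B_i^T$ over $i$ gives $\E[\bv W]=\bv A^T\bv B$.

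For the variance I would use the identity $\langle\bv A_i\bv B_i^T,\bv A_j\bv B_j^T\rangle_F=\langle\bv A_i,\bv A_j\rangle\langle\bv B_i,\bv B_j\rangle$ to write
\[
\E[\|\bv W-\bv A^T\bv B\|_F^2]=\sum_{i,j}\E[(\hat{c}_i-1)(\hat{c}_j-1)]\,\langle\bv A_i,\bv A_j\rangle\langle\bv B_i,\bv B_j\rangle .
\]
The crux is that the off-diagonal terms vanish: $\E[(\hat{c}_i-1)(\hat{c}_j-1)]=0$ for $i\ne j$, i.e. the Priority-Sampling adjusted coefficients $\hat{c}_i$ are pairwise uncorrelated --- a matrix, two-sketch version of the classical pairwise-uncorrelatedness property of Priority Sampling. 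I would establish it by revealing $h(i)$ and then $h(j)$ and invoking, at each step, the same ``realized threshold equals leave-one-out threshold on the inclusion event'' identity used above, now carried through for both $\tau_{\bv A}$ and $\tau_{\bv B}$ at once, since the single hash value $h(i)$ perturbs both rank orders simultaneously. Granting this, only the diagonal survives, and since $\E[\hat{c}_i^2\mid\{h(j)\}_{j\ne i}]=1/q_i$, we obtain $\E[\|\bv W-\bv A^T\bv B\|_F^2]=\sum_i(\E[1/q_i]-1)\,\|\bv A_i\|_2^2\|\bv B_i\|_2^2$.

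It then remains to bound $\E[1/q_i]-1$. From $1/\min(1,x,y)-1=\max\{(1/x-1)_+,(1/y-1)_+\}\le(1/x-1)_++(1/y-1)_+$ we get $\E[1/q_i]-1\le \E[1/\tau_{\bv A}^{(-i)}]/\|\bv A_i\|_2^2+\E[1/\tau_{\bv B}^{(-i)}]/\|\bv B_i\|_2^2$. The scalar ingredient --- which I would import from, or reprove along the lines of, the streamlined analysis of Priority Sampling in \cite{DaliriFreireMusco:2024} --- is $\E[1/\tau_{\bv A}^{(-i)}]\le\tfrac{1}{k-1}\sum_{j\ne i}\|\bv A_j\|_2^2\le\tfrac{1}{k-1}\|\bv A\|_F^2$. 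This follows by writing $\E[1/\tau_{\bv A}^{(-i)}]=\int_0^\infty\Pr[\tau_{\bv A}^{(-i)}<t]\,t^{-2}\,dt$ and, with $N(t):=\#\{j\ne i:h(j)\le\|\bv A_j\|_2^2 t\}$, bounding $\Pr[\tau_{\bv A}^{(-i)}<t]=\Pr[N(t)\ge k]\le\E\binom{N(t)}{k}\le(t\sum_{j\ne i}\|\bv A_j\|_2^2)^k/k!$ and then integrating (this integration has to be carried out tightly --- a crude split loses an extra constant --- to land exactly $1/(k-1)$). Substituting yields $(\E[1/q_i]-1)\|\bv A_i\|_2^2\|\bv B_i\|_2^2\le\tfrac{1}{k-1}(\|\bv A\|_F^2\|\bv B_i\|_2^2+\|\bv B\|_F^2\|\bv A_i\|_2^2)$, and summing over $i$ with $\sum_i\|\bv A_i\|_2^2=\|\bv A\|_F^2$ and $\sum_i\|\bv B_i\|_2^2=\|\bv B\|_F^2$ gives the claimed bound $\tfrac{2}{k-1}\|\bv A\|_F^2\|\bv B\|_F^2$.

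The main obstacle, flagged already in the introduction, is the pairwise-uncorrelatedness step: Priority Sampling is without replacement, so the $\hat{c}_i$ are genuinely dependent, and here the shared hash means each $h(i)$ simultaneously moves the $\bv A$-ranking and the $\bv B$-ranking, so the one-item-at-a-time revelation argument must be run while bookkeeping two thresholds rather than one. Extracting the sharp constant $1/(k-1)$ --- and not something larger --- from the $\E[1/\tau^{(-i)}]$ integral is the second, more technical, pressure point. By comparison the size bound, the unbiasedness computation, and the algebra that reassembles the variance bound are all routine.
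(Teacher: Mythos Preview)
Your proposal is correct and follows essentially the same approach as the paper: leave-one-out thresholds $\tau_{\bv A}^{(-i)},\tau_{\bv B}^{(-i)}$ for unbiasedness, pairwise uncorrelatedness of the adjusted weights $\hat c_i=\mathbbm 1[i\in\mathcal T]/q_i$ (the paper establishes this via leave-two-out thresholds $\tau^{i,j}$, equivalent to your sequential-revelation argument), and the bound $\E[1/\tau^{(-i)}]\le\|\bv A\|_F^2/(k-1)$ imported from \cite{DaliriFreireMusco:2024}. The only organizational difference is that the paper packages the variance computation entrywise (\Cref{lemma:innerprod_columns} bounds $\Var[\bv W_{x,y}]$ for fixed $x,y$, then sums over entries), whereas you work directly at the row level via the Frobenius inner-product identity $\langle\bv A_i\bv B_i^T,\bv A_j\bv B_j^T\rangle_F=\langle\bv A_i,\bv A_j\rangle\langle\bv B_i,\bv B_j\rangle$; the two decompositions unwind to the same sum and the same final bound.
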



We prove \Cref{thm:main_priority} via \Cref{lemma:innerprod_columns}, which is proven in \Cref{app:sup_proof:proof_innerprod_columns} due to space limitations.
\begin{lemma}
\label{lemma:innerprod_columns}
Let $\bv{A}, \bv{B},$ and $\bv{W}$ be as in \Cref{thm:main_priority}. For any $x,y \in [d]\times [m]$ we have:
\begin{align*}
\E\left[\bv{W}_{x,y}\right] =  [\bv{A}^T\bv{B}]_{x,y}\text{ and } \E\left[(\bv{W}_{x,y} - [\bv{A}^T\bv{B}]_{x,y})^2\right] \leq  \sum_{i=1}^n \frac{\bv{A}_{i,x}^2 \bv{B}_{i,y}^2}{k-1} \left(\frac{\|\bv{A}\|_F^2}{\|\bv{A}_i\|_2^2} + \frac{\|\bv{B}\|_F^2}{\|\bv{B}_i\|_2^2}\right). 
\end{align*}
\end{lemma}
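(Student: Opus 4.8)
The plan is to fix a single entry $(x,y)\in[d]\times[m]$ and reduce to a one-dimensional statement. Writing $a_i := \bv{A}_{i,x}$, $b_i := \bv{B}_{i,y}$, $\alpha_i := \|\bv{A}_i\|_2^2$ and $\beta_i := \|\bv{B}_i\|_2^2$, we have $[\bv{A}^T\bv{B}]_{x,y}=\sum_i a_i b_i$, and $\bv{W}_{x,y}$ is the Priority-Sampling estimate of $\sum_i a_i b_i$ in which row $i$ is kept according to ranks built from $\alpha_i,\beta_i$ (not from $a_i,b_i$). The engine of the proof is a ``leave-one-out'' description of the sketch. For an index $i$ with $\bv{A}_i,\bv{B}_i\neq\bv{0}$, I would let $\tau_{\bv{A}}^{-i}$ denote the $k$-th smallest of $\{h(j)/\alpha_j : j\neq i,\ \bv{A}_j\neq\bv{0}\}$ (and $\infty$ if fewer than $k$ such $j$ exist), and define $\tau_{\bv{B}}^{-i}$ analogously. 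A short order-statistics argument then shows: (i) $i\in\mathcal{T}$ iff $h(i) < t_i$, where $t_i := \min(\alpha_i\tau_{\bv{A}}^{-i},\,\beta_i\tau_{\bv{B}}^{-i})$ is a function of $\{h(j)\}_{j\neq i}$ only; and (ii) on $\{i\in\mathcal{T}\}$ the thresholds returned by \Cref{alg:priority_sampling} satisfy $\tau_{\bv{A}}=\tau_{\bv{A}}^{-i}$ and $\tau_{\bv{B}}=\tau_{\bv{B}}^{-i}$, so that the weight \Cref{alg:approximate_matrix_multiplication} assigns to $i$ equals $1/q_i$ with $q_i := \min(1,t_i)$. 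Hence
\begin{align*}
\bv{W}_{x,y}=\sum_i \frac{a_i b_i}{q_i}Z_i,\qquad Z_i:=\1[i\in\mathcal{T}],\qquad \E[Z_i\mid\{h(j)\}_{j\neq i}]=q_i .
\end{align*}

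Given this, unbiasedness is routine: conditioning on $\{h(j)\}_{j\neq i}$ fixes $q_i$, and $\E[(a_ib_i/q_i)Z_i\mid\{h(j)\}_{j\neq i}]=(a_ib_i/q_i)\cdot q_i=a_ib_i$, so $\E[\bv{W}_{x,y}]=\sum_i a_ib_i=[\bv{A}^T\bv{B}]_{x,y}$. For the second moment, write $\bv{W}_{x,y}-[\bv{A}^T\bv{B}]_{x,y}=\sum_i D_i$ with $D_i:=a_ib_i(Z_i/q_i-1)$, so $\E[(\bv{W}_{x,y}-[\bv{A}^T\bv{B}]_{x,y})^2]=\sum_i\E[D_i^2]+\sum_{i\neq j}\E[D_iD_j]$. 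Conditioning on $\{h(j)\}_{j\neq i}$ gives $\E[D_i^2]=(a_ib_i)^2\,\E[1/q_i-1]$; since $q_i=\min(1,\alpha_i\tau_{\bv{A}}^{-i},\beta_i\tau_{\bv{B}}^{-i})$ we have $1/q_i-1\le 1/(\alpha_i\tau_{\bv{A}}^{-i})+1/(\beta_i\tau_{\bv{B}}^{-i})$, and the standard bound on inverse order statistics of weighted uniform variables, $\E[1/\tau_{\bv{A}}^{-i}]\le\big(\sum_{j\neq i}\alpha_j\big)/(k-1)\le\|\bv{A}\|_F^2/(k-1)$ (and likewise for $\bv{B}$), yields
\begin{align*}
\E[D_i^2]\le (a_ib_i)^2\left(\frac{\|\bv{A}\|_F^2}{(k-1)\alpha_i}+\frac{\|\bv{B}\|_F^2}{(k-1)\beta_i}\right),
\end{align*}
which is exactly the $i$-th summand in the claimed bound once we recall $a_i=\bv{A}_{i,x}$, $b_i=\bv{B}_{i,y}$.

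It then remains to show $\E[D_iD_j]\le 0$ for $i\neq j$, and this is the step I expect to be the main obstacle: Priority Sampling is without-replacement and the two sketches are coordinated through the same hash $h$, so $Z_i$ and $Z_j$ stay genuinely dependent even after conditioning. Since $\E[D_iD_j]=a_ib_ia_jb_j(\E[Z_iZ_j/(q_iq_j)]-1)$, it suffices to prove $\E[Z_iZ_j/(q_iq_j)]\le 1$. I would condition on all hashes except $h(i)$ and $h(j)$; an order-statistics computation (now tracking how re-inserting two items shifts the $k$-th smallest rank) shows that, given this conditioning, $t_i$ is a nondecreasing function of $h(j)$ alone and $t_j$ a nondecreasing function of $h(i)$ alone, each being a linear map clamped to an interval determined by the $(k-1)$-st and $k$-th order statistics of the remaining ranks. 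The inequality $\E[Z_iZ_j/(q_iq_j)]\le 1$ then reduces to a two-variable inequality on $[0,1]^2$ for these clamped monotone maps, which I would settle by case analysis on which clamp branch is active. This mirrors the classical fact that Priority-Sampling subset-sum estimators are pairwise uncorrelated and the streaming analysis of \cite{DaliriFreireMusco:2024}, but must be re-done here to handle (a) the coordination, i.e.\ the $\min$ of two independently-thresholded ranks, and (b) the mismatch between the sampling weights $\alpha_i,\beta_i$ and the summed quantities $a_ib_i$. Once $\sum_{i\neq j}\E[D_iD_j]\le 0$ is established, combining with the diagonal bound proves the lemma; summing the lemma over $(x,y)$ and using $\sum_{x,y}\bv{A}_{i,x}^2\bv{B}_{i,y}^2=\|\bv{A}_i\|_2^2\|\bv{B}_i\|_2^2$ recovers \Cref{thm:main_priority}.
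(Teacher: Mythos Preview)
Your setup for unbiasedness and the diagonal second-moment terms matches the paper's proof: the same leave-one-out thresholds $\tau_{\bv{A}}^{-i},\tau_{\bv{B}}^{-i}$, the observation that they coincide with the algorithm's thresholds on $\{i\in\mathcal{T}\}$, and the appeal to $\E[1/\tau_{\bv{A}}^{-i}]\le\|\bv{A}\|_F^2/(k-1)$ from \cite{DaliriFreireMusco:2024}.

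The cross-term step has a genuine gap. You correctly write $\E[D_iD_j]=a_ib_ia_jb_j\bigl(\E[Z_iZ_j/(q_iq_j)]-1\bigr)$ and then claim it suffices to show $\E[Z_iZ_j/(q_iq_j)]\le 1$. But $a_i=\bv{A}_{i,x}$ and $b_i=\bv{B}_{i,y}$ are \emph{signed} entries, so $a_ib_ia_jb_j$ can be negative, in which case $\E[D_iD_j]\le 0$ would require the \emph{opposite} inequality. A one-sided bound in either direction is therefore insufficient; you need the exact identity $\E[Z_iZ_j/(q_iq_j)]=1$, i.e., pairwise uncorrelation, which gives $\E[D_iD_j]=0$ regardless of sign. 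The paper proves this equality directly, and far more simply than your proposed case analysis on clamped monotone maps. Rather than tracking how the leave-one-out quantities $q_i,q_j$ shift with $h(j),h(i)$, it introduces leave-\emph{two}-out thresholds $\tau_{\bv{A}}^{i,j},\tau_{\bv{B}}^{i,j}$, defined as the $(k-1)$-st smallest rank over $[n]\setminus\{i,j\}$. On the event $\{i,j\in\mathcal{T}\}$ one has $\tau_{\bv{A}}=\tau_{\bv{A}}^{i,j}$ and $\tau_{\bv{B}}=\tau_{\bv{B}}^{i,j}$, so the estimator's weights equal $p_i^{(i,j)}:=\min(1,\alpha_i\tau_{\bv{A}}^{i,j},\beta_i\tau_{\bv{B}}^{i,j})$ and $p_j^{(i,j)}$, which depend only on $\{h(\ell)\}_{\ell\neq i,j}$. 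Conditioned on those hashes, $\{i,j\in\mathcal{T}\}=\{h(i)<p_i^{(i,j)}\}\cap\{h(j)<p_j^{(i,j)}\}$ has probability exactly $p_i^{(i,j)}p_j^{(i,j)}$ by independence of $h(i),h(j)$, so $\E[Z_iZ_j/(q_iq_j)]=1$ with no casework. The coordination through the shared hash and the mismatch between sampling weights and summed quantities that you flag as obstacles do not actually interfere: the single-sketch leave-two-out trick goes through verbatim.
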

\Cref{lemma:innerprod_columns} gives an entrywise guarantee on the error of the approximation $\bv{W}$, which we can then use to give an overall bound on the Frobenius norm error. That analysis is given below.

\begin{proof}[Proof of \Cref{thm:main_priority}]
The entrywise expectation guarantee of \Cref{lemma:innerprod_columns} immediately gives $\E[\bv{W}] = \bv{A}^T\bv{B}$. We are left to bound the expected Frobenius error, which can be written as a sum over entries:
\begin{align*}
\E\left[\|\bv{W} - \bv{A}^T\bv{B}\|_F^2\right] &= \sum_{x,y} \E\left[\left(\bv{W}_{x,y} - [\bv{A}^T\bv{B}]_{x,y}\right)^2\right] 
\leq \sum_{x,y} \sum_{i=1}^n \frac{\bv{A}_{i,x}^2 \bv{B}_{i,x}^2}{k-1} \left(\frac{\|\bv{A}\|_F^2}{\|\bv{A}_i\|_2^2} + \frac{\|\bv{B}\|_F^2}{\|\bv{B}_i\|_2^2}\right)\\
& = \frac{1}{k-1}\sum_{i=1}^n \left(\frac{\|\bv{A}\|_F^2}{\|\bv{A}_i\|_2^2} + \frac{\|\bv{B}\|_F^2}{\|\bv{B}_i\|_2^2}\right)\sum_{x=1}^d\bv{A}_{i,x}^2\sum_{y=1}^m\bv{B}_{i,y}^2\\
& = \frac{1}{k-1}\sum_{i=1}^n \left(\frac{\|\bv{A}\|_F^2}{\|\bv{A}_i\|_2^2} + \frac{\|\bv{B}\|_F^2}{\|\bv{B}_i\|_2^2}\right)\|\bv{A}_i\|_2^2\|\bv{B}_i\|_2^2\\
& = \frac{1}{k-1}\sum_{i=1}^n \|\bv{A}\|_F^2\|\bv{B}_i\|_2^2 + \|\bv{B}\|_F^2\|\bv{A_i}\|_2^2 = \frac{2}{k-1}\|\bv{A}\|_F^2\|\bv{B}\|_F^2. \qedhere
\end{align*}
\end{proof}
With \Cref{thm:main_priority} in place, our main result, \Cref{thrm:main} follows as an immediate corollary:
\begin{proof}[Proof of \Cref{thrm:main}]
Our main result, \Cref{thrm:main}, follows as an immediate corollary of \Cref{thm:main_priority}. In particular, if we set $k = \frac{2/\delta}{\epsilon^2} + 1$, then we have that $\E\left[\|\bv{W} - \bv{A}^T\bv{B}\|^2_F\right] \leq \epsilon^2 \delta \|\bv{A}\|_F^2\|\bv{B}\|_F^2$. Applying Markov's inequality proves the theorem. 
\end{proof}

\section{Sketched Regression}
\label{sec:regression}
In this section, we focus on proving \Cref{thrm:regression}. To do so, we first prove a simpler version of the result that, instead of just storing a subsample of rows from $\bv{A}$ in $\mathcal{S}(\bv{A})$, also explicitly stores the $d\times d$ covariance matrix $\bv{A}^T\bv{A}$. The pseudocode for this method is include as \Cref{alg:sketched_regression}. It leads to a sketch of size $O(d^2 + ds/\epsilon)$ when $\bv{A}$ has $s$ non-zeros per row. We later show that the sketch can be modified to have size $O(ds/\epsilon)$ by replacing $\bv{A}^T\bv{A}$ with another subsample of rows from $\bv{A}$. 

\begin{proof}[Proof of \Cref{thrm:regression}]
Our goal is to compute a vector $\bv{\tilde{x}}$ that approximates $\bv{x}^* = (\bv{A}^T\bv{A})^{-1}\bv{A}^T\bv{b}$. In particular, we wish to obtain an upper bound on $\|\bv{A}\bv{\tilde{x}} - \bv{b}\|_2^2$. Observing that $\bv{A}\bv{x}^* - \bv{b}$ is orthogonal to any vector in the column span of $\bv{A}$, we can apply Pythagorean theorem to write:
\begin{align*}
\|\bv{A}\bv{\tilde{x}} - \bv{b}\|_2^2 = \|\bv{A}\bv{{x}}^* - \bv{b}\|_2^2 + \|\bv{A}\bv{\tilde{x}} - \bv{A}\bv{x}^*\|_2^2, 
\end{align*}
or equivalently:
\begin{align}
\label{eq:pythagorean}
\|\bv{A}\bv{\tilde{x}} - \bv{b}\|_2^2 - \|\bv{A}\bv{{x}}^* - \bv{b}\|_2^2 = \|\bv{A}\bv{\tilde{x}} - \bv{A}(\bv{A}^T\bv{A})^{-1}\bv{A}^T\bv{b}\|_2^2.
\end{align}
We claim that the sketching and regression procedure in \Cref{alg:sketched_regression} returns $\tilde{\bv{x}}$ such that $\bv{A}\bv{\tilde{x}}$ is exactly equal to $\mathcal{F}(\mathcal{S}(\bv{A}(\bv{A}^T\bv{A})^{-1}\bv{A}^T), \mathcal{S}(\bv{b}))$, where $\mathcal{S}(\cdot)$ denotes the sketching procedure of \Cref{alg:priority_sampling} and $\mathcal{F}(\cdot)$ denotes the estimation procedure of \Cref{alg:approximate_matrix_multiplication_summation}. However, it does so in an implicit way, without every explicitly forming the large $n\times n$ and possibly dense matrix $\bv{A}(\bv{A}^T\bv{A})^{-1}\bv{A}^T$. If this claim holds, then the main guarantee of \Cref{thrm:regression} immediately follows. In particular, by the guarantee of \Cref{thrm:main}, as long as we choose sketch size $k = O(d/\epsilon)$, we would have:
\begin{align*}
\|\bv{A}\bv{\tilde{x}} - \bv{A}(\bv{A}^T\bv{A})^{-1}\bv{A}^T\bv{b}\|_2^2 \leq \frac{\epsilon}{d} \|\bv{A}(\bv{A}^T\bv{A})^{-1}\bv{A}^T\|_F^2 \|\bv{b}\|_2^2 = \frac{\epsilon}{d}\cdot d \|\bv{b}\|_2^2 = \epsilon \|\bv{b}\|_2^2. 
\end{align*}
Above we have used that $\|\bv{A}(\bv{A}^T\bv{A})^{-1}\bv{A}^T\|_F^2 = \tr(\bv{A}(\bv{A}^T\bv{A})^{-1}\bv{A}^T\bv{A}(\bv{A}^T\bv{A})^{-1}\bv{A}^T) = \tr(\bv{I}_{d}) = d$. 
Plugging into \eqref{eq:pythagorean} would then prove the theorem. 

So, it is left to establish that \Cref{alg:sketched_regression} returns $\tilde{\bv{x}}$ such that $\bv{A}\tilde{\bv{x}}$ is exactly equal to $\mathcal{F}(\mathcal{S}(\bv{A}(\bv{A}^T\bv{A})^{-1}\bv{A}^T), \mathcal{S}(\bv{b}))$. For this to be the case, it can be checked that it suffices to simply sample from $\bv{A}$ with probabilities proportional to the squared row norms in $\bv{A}(\bv{A}^T\bv{A})^{-1}\bv{A}^T$. Then, multiplying the sampled rows by $(\bv{A}^T\bv{A})^{-1}$ to produce $\tilde{\bv{x}}$, and again by $\bv{A}$ to produce $\bv{A}\tilde{\bv{x}}$ exactly reproduces $\mathcal{F}(\mathcal{S}(\bv{A}(\bv{A}^T\bv{A})^{-1}\bv{A}^T), \mathcal{S}(\bv{b}))$. 

The $i^\text{th}$ squared row norm of $\bv{A}(\bv{A}^T\bv{A})^{-1}\bv{A}^T$ can be written as $\|\bv{A}(\bv{A}^T\bv{A})^{-1}\bv{A}^T\bv{e}_i\|_2^2$, where $\bv{e}_i$ denotes the $i^\text{th}$ standard basis vector. We then have:
\begin{align*}
\|\bv{A}(\bv{A}^T\bv{A})^{-1}\bv{A}^T\bv{e}_i\|_2^2 = \bv{e}_i^T\bv{A}(\bv{A}^T\bv{A})^{-1}\bv{A}^T\bv{A}(\bv{A}^T\bv{A})^{-1}\bv{A}^T\bv{e}_i = \bv{A}_i^T(\bv{A}^T\bv{A})^{-1}\bv{A}_i. 
\end{align*}
The quantity above is exactly equal to the $i^\text{th}$ \emph{statistical leverage score} of $\bv{A}$, which is  the quantity that \Cref{thrm:regression} uses when Priority Sampling, so the claim holds. We note that, besides the naive approach, efficient algorithms are known for more quickly computing the statistical leverage scores of a matrix $\bv{A}$, although our focus here is on sketch size as opposed to construction time \citep{MahoneyDrineasMagdon-Ismail:2012,ClarksonWoodruff:2013}

\paragraph{Optimized Method.} The approach above immediatly gives an $O(d^2 + ds/\epsilon)$ space sketch for least squares regression when $\bv{A}$ has $s$-sparse rows. This already improves on the space complexity of $O(d^2/\epsilon)$ achieved by linear sketching methods. However, in settings where $d$ is large, it would be better to avoid a quadratic dependence on $d$ entirely. To do so, instead of explicitly storing the $d\times d$ matrix $\bv{A}^T\bv{A}$ in our sketch, we can store $\bv{S}\bv{A}$, where $\bv{S} \in \R^{z\times d}$ is a matrix that selects and reweights $z$ rows from $\bv{A}$. Instead of returning $\tilde{\bv{x}} = (\bv{A}^T \bv{A})^{-1} \bv{W}$ as in Line 6 of \Cref{alg:sketched_regression}, we would return:
\begin{align*}
 \tilde{\bv{x}}' = (\bv{A}^T \bv{S}^T\bv{S}\bv{A})^{-1} \bv{W}.
\end{align*}
It is well known that there exist choices of $\bv{S}$ with $m = O(d/\epsilon)$ rows such that $\bv{A}^T \bv{S}^T\bv{S}\bv{A}$ is a $\sqrt{\epsilon}$-relative error spectral approximation to $\bv{A}^T\bv{A}$ \citep{BatsonSpielmanSrivastava:2012}. I.e., 
\begin{align*}
(1-\sqrt{\epsilon})\bv{A}^T\bv{A} &\preceq \bv{A}^T \bv{S}^T\bv{S}\bv{A} \preceq(1+\sqrt{\epsilon})\bv{A}^T\bv{A} \text{ and }\\
(1-\sqrt{\epsilon})(\bv{A}^T\bv{A})^{-1} &\preceq (\bv{A}^T \bv{S}^T\bv{S}\bv{A})^{-1} \preceq(1+\sqrt{\epsilon})(\bv{A}^T\bv{A})^{-1},
\end{align*}
where $\preceq$ denotes the Loewner order. Given the second guarantee, we can check that $\|\bv{A}(\bv{A}^T\bv{A})^{-1}\bv{A}^T - \bv{A}(\bv{A}^T\bv{S}^T\bv{S}\bv{A})^{-1}\bv{A}^T \|_2 \leq \sqrt{\epsilon}$, where $\|\cdot\|_2$ denotes the operator norm. 

Now, observe that 
$
\bv{A}\tilde{\bv{x}}' = \bv{A}(\bv{A}^T \bv{S}^T\bv{S}\bv{A})^{-1}\bv{A}^T\bv{A}\tilde{\bv{x}}, 
$ and thus:
\begin{align*}
\|\bv{A}\tilde{\bv{x}}' - \bv{A}\tilde{\bv{x}}\|_2 = \|\bv{A}(\bv{A}^T \bv{S}^T\bv{S}\bv{A})^{-1}\bv{A}^T\bv{A}\tilde{\bv{x}} - \bv{A}(\bv{A}^T\bv{A})^{-1}\bv{A}^T\bv{A}\tilde{\bv{x}}\|_2 \leq \sqrt{\epsilon} \|\bv{A}\tilde{\bv{x}}\|_2 \leq 2 \sqrt{\epsilon}\|\bv{b}\|_2. 
\end{align*}
In the last step, we used that $\|\bv{A}\tilde{\bv{x}} - \bv{b}\|_2 \leq \epsilon \|\bv{b}\|_2$ to loosely upper bound $\|\bv{A}\tilde{\bv{x}}\|_2$. Finally, we put everything together. As proven earlier, $\|\bv{A}\bv{\tilde{x}} - \bv{A}\bv{{x}}^*\|_2 \leq \sqrt{\epsilon}\|\bv{b}\|_2$. Applying triangle inequality, we thus that that $\|\bv{A}\tilde{\bv{x}}' - \bv{A}\bv{{x}}^*\|_2 \leq 3\sqrt{\epsilon}\|\bv{b}\|_2$. Applying Pythagorean theorem as before, we conclude that $\|\bv{A}\tilde{\bv{x}}' - \bv{b}\|_2^2 \leq \|\bv{A}{\bv{x}}^* - \bv{b}\|_2^2 + 9\epsilon \|\bv{b}\|_2^2$. Adjusting $\epsilon$ by a constant gives the desired result.
\end{proof}
We remark that one way of producing a matrix $\bv{S}$ satisfying the spectral approximation guarantee above is to subsample and reweight $m = O(d\log d/\epsilon)$ rows from $\bv{A}$ by leverage scores. While worse by a $\log d$ factor than the deterministic methods given e.g. in \cite{BatsonSpielmanSrivastava:2012}, the advantage of such an approach is that we can \emph{reuse} the samples used to approximate $\bv{A}^T\bv{A}$ when approximating $\bv{A}^T\bv{b}$, for which we also sample via leverage scores. This ``single sketch'' procedure is what we implement in our experimental section.

\begin{figure}
\vspace{-1em}
\begin{algorithm}[H]
    \caption{Sketching for Regression (not optimized)}
    \label{alg:sketched_regression}
	\begin{algorithmic}[1]
            \Require Matrix $\bv{A}_{n\times d}$, matrix $\bv{b}_{n\times 1}$, randomness seed $s$, and target error $\epsilon$.
            \State Compute $\ell_i$ as the leverage score of $\bv{A}$: $\ell_i = \bv{A}_{i}(\bv{A}^T\bv{A})^{-1}\bv{A}_{i}^T$.
            \State Construct sketches $\mathcal{S}(\bv{A})$ with a target sampling size of $O\left(\frac{d}{\epsilon}\right)$ (rows) using \Cref{alg:priority_sampling} with a shared seed $s$. However, compute the rank (line 2 of \Cref{alg:priority_sampling}) as $R_i = \frac{h(i)}{\ell_i}$.
            \State Construct sketches $\mathcal{S}(\bv{b})$ with a target sampling size of $O\left(\frac{d}{\epsilon}\right)$ (rows) using \Cref{alg:priority_sampling} with a shared seed $s$.
		\Ensure $\left(\mathcal{S}(\bv{A}), \bv{A}^T \bv{A}\right)$, $\mathcal{S}(\bv{b})$  
        \vspace{1mm} \hrule \vspace{1mm}
    \hspace*{-1.4cm}{\bf Procedure} $ \text{\textsc{Regression}}\left(\left(\bv{A}^T\bv{A},\mathcal{S}(\bv{A}\right), \mathcal{S}(\bv{b})\right)$.
    \State Compute Compute $\ell_i$ as the leverage score of $\bv{A}$: $\ell_i = \bv{A}_{i}(\bv{A}^T\bv{A})^{-1}\bv{A}_{i}^T$ for any $\bv{A}_i$ in $\mathcal{S}(\bv{A})$.
    \State Compute $\mathcal{T}=\mathcal{I}_{\bv{A}} \cap \mathcal{I}_{\bv{b}}$.
    \State Compute $\bv{W} = \sum_{i \in \mathcal{T}} \frac{\bv{A}_i\bv{b}_i}{\min(1, \ell_i \cdot \tau_{\bv{A}}, \bv{b}_i^2 \cdot \tau_{\bv{b}})}$.
    \Ensure $\tilde{\bv{x}} = (\bv{A}^T \bv{A})^{-1} \bv{W}$
	\end{algorithmic}
    \vspace{-.2em}
\end{algorithm}
\vspace{-2em}
\end{figure}

\begin{figure}[ht]
    \centering
    \begin{subfigure}[b]{0.32\linewidth}
        \centering
        \includegraphics[width=\textwidth]{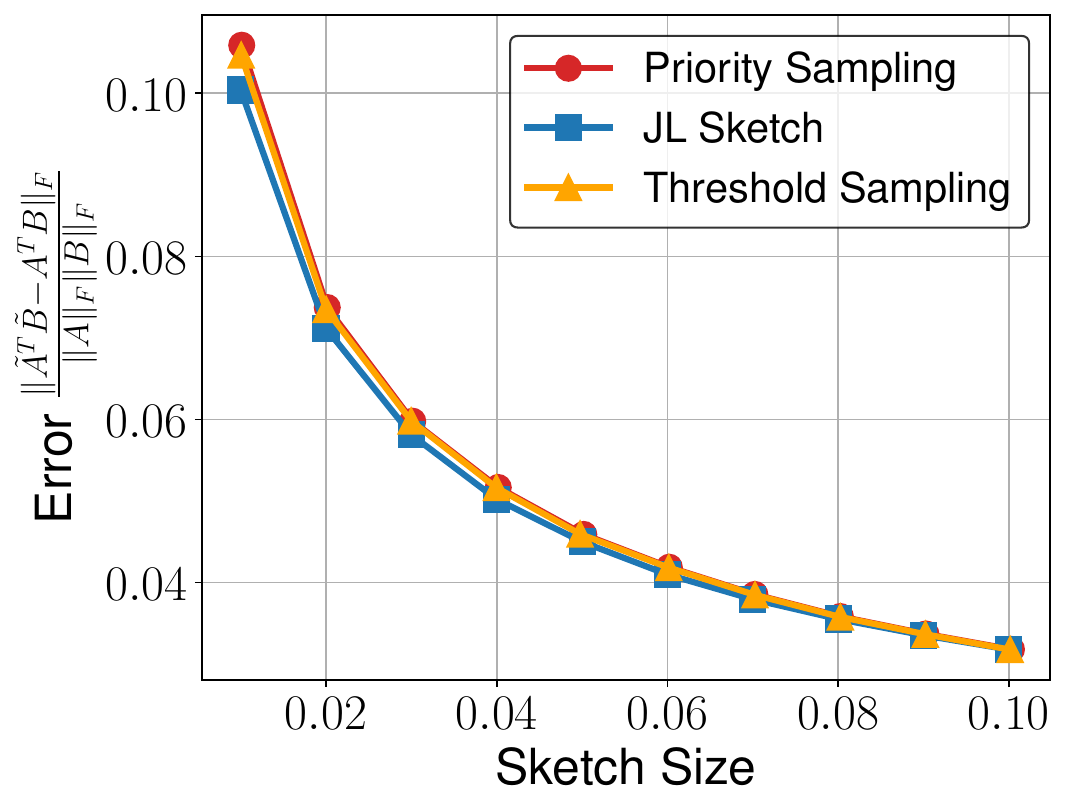}
        \caption{$80\%$ Sparsity.}
        \label{subfig:synthetic0}
    \end{subfigure}
    \hfill 
    \begin{subfigure}[b]{0.32\linewidth}
        \centering
        \includegraphics[width=\textwidth]{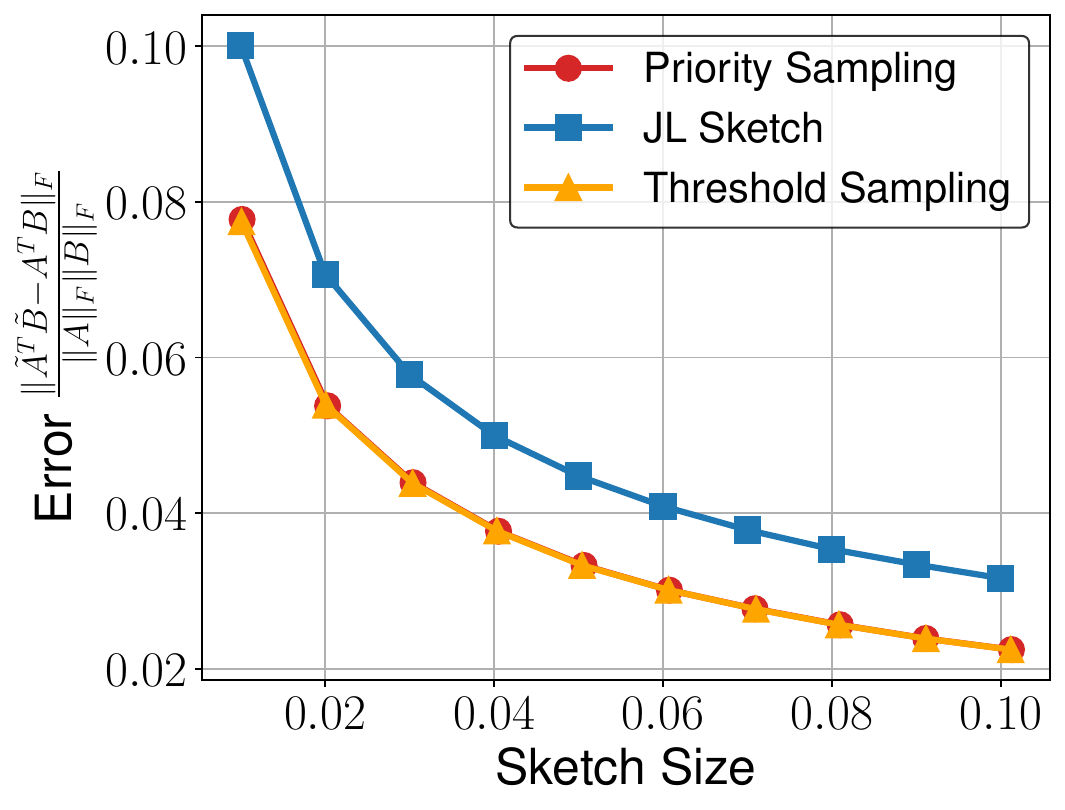}
        \caption{$40\%$ Sparsity.}
        \label{subfig:synthetic20}
    \end{subfigure}
    \hfill    
    \begin{subfigure}[b]{0.32\linewidth}
        \centering
        \includegraphics[width=\textwidth]{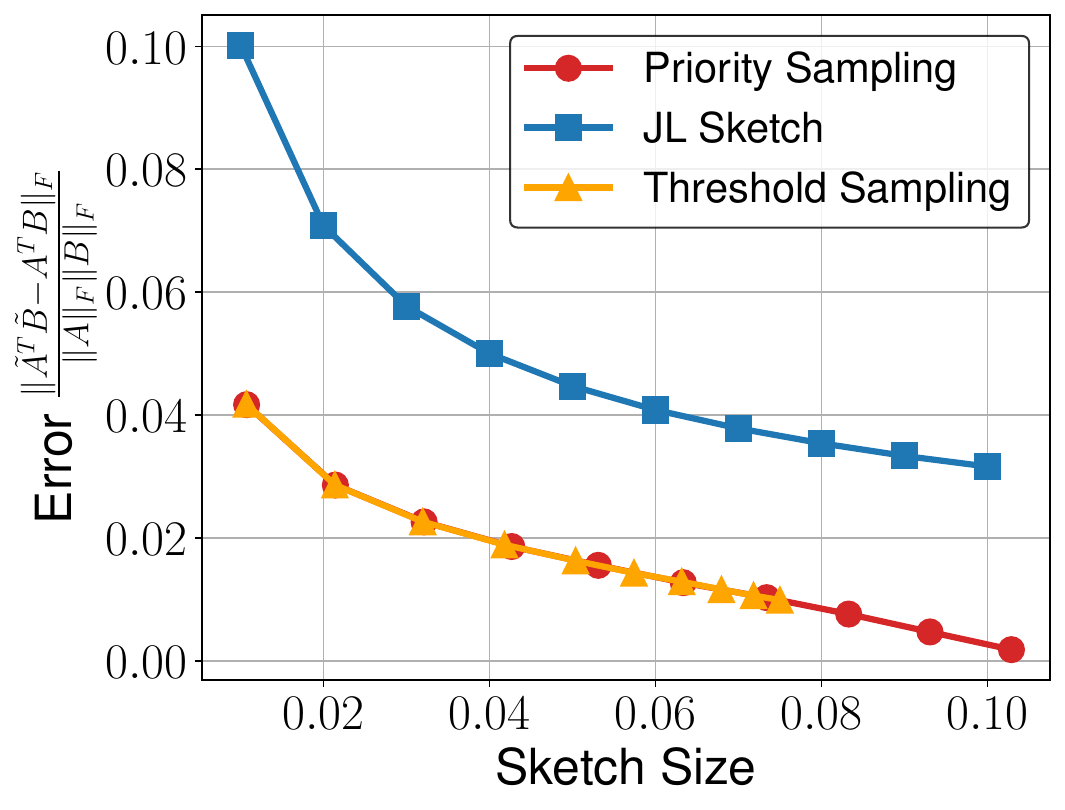}
        \caption{$10\%$ Sparsity.}
        \label{subfig:synthetic40}
    \end{subfigure}
    \caption{Performance of matrix product sketching over synthetic data with varying sparsity levels (10\%, 40\%, and 80\%). Priority sampling and threshold sampling are depicted on top of each other and both methods outperform the JL sketch as the level of sparsity increases.}
    \label{fig:synthetic}
    \vspace{-0.2cm}
\end{figure}

\section{Experiments}
\label{sec:experiments}
We experimentally evaluate our sampling-based matrix product sketches in a variety of settings. First, we use synthetic data to directly compare the method to the standard linear sketching approach: Johnson-Lindenstrauss random project (which we denote as JL Sketch in our plots). As predicted by our theory, the methods perform similarly for dense matrices, but our Priority Sampling method obtains more compact sketches for sparse matrices. 
We also apply our method to a number of real-world problems, including to  regression tasks, as outlined in \Cref{sec:regression}. We show that for two popular datasets, our method can outperform over the best existing linear sketch. 

Additionally, we apply our Priority Sampling method to approximating matrix multiplications that arise in transformer-based large language models. 
Deploying auto regressive language models involves performing attention decoding in an online setting, where key and value embeddings from each transformer layer are cached to eliminate redundant computations. More precisely, during each token generation phase, the stream of tokens is encapsulated by three matrices known as the query ($\bv{Q}$), key ($\bv{K}$), and value ($\bv{V}$) embeddings. At the heart of this process, each iteration involves calculating the attention matrix as $\text{\texttt{Att}} = \text{\texttt{Softmax}}(\bv{Q}\bv{K}^T/\sqrt{d}) \bv{V}$ where $d$ is the embedding dimension.
Recent studies \citep{hooper2024kvquant, LiuYuan:2024, zandieh2024qjl1bitquantizedjl} have introduced methods that apply vector quantization to the key and value embeddings, replacing the full matrices with a quantized matrix. In this study, we employ Priority Sampling to sketch $\bv{Q}$ and $\bv{K}$. We assess the performance of our approach in approximating $\bv{Q}\bv{K}^T$ compared to linear sketching techniques (see \cref{app:att} for detailed results).

\begin{figure}[ht]
    \centering
    \begin{subfigure}[b]{0.32\linewidth}
        \centering
        \includegraphics[width=\textwidth]{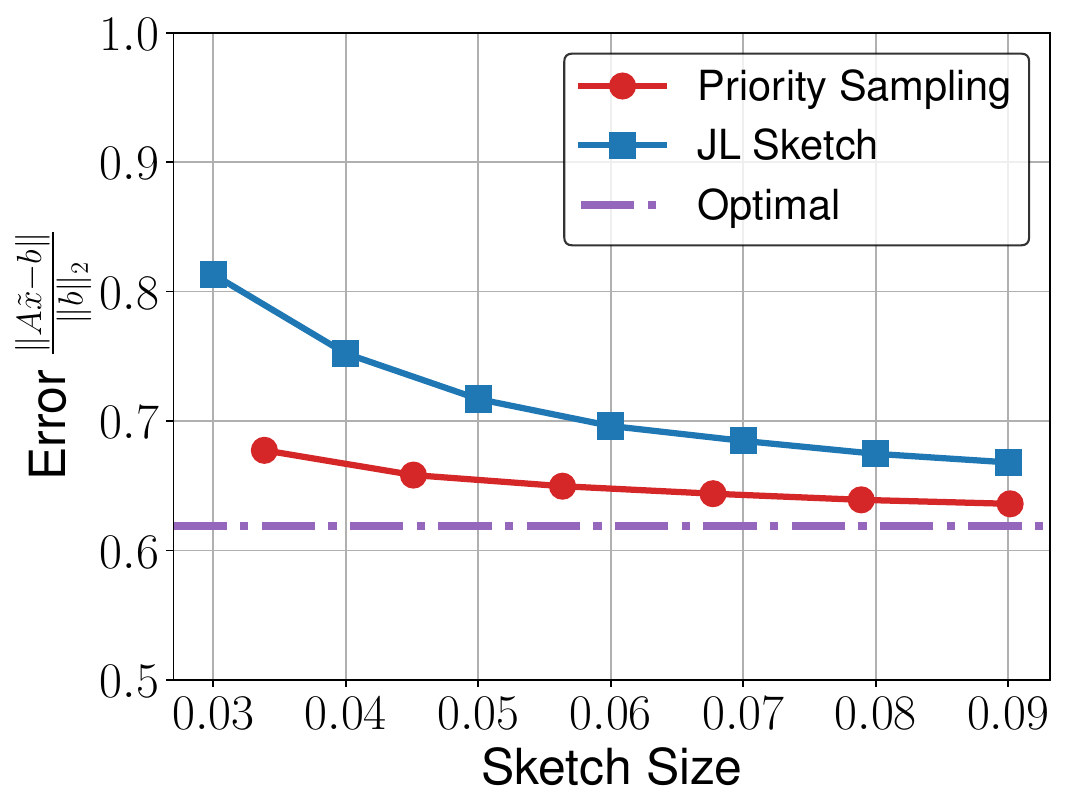}
        \caption{dimension 128.}
        \label{subfig:imdb128}
    \end{subfigure}
    \hfill    
    \begin{subfigure}[b]{0.32\linewidth}
        \centering
        \includegraphics[width=\textwidth]{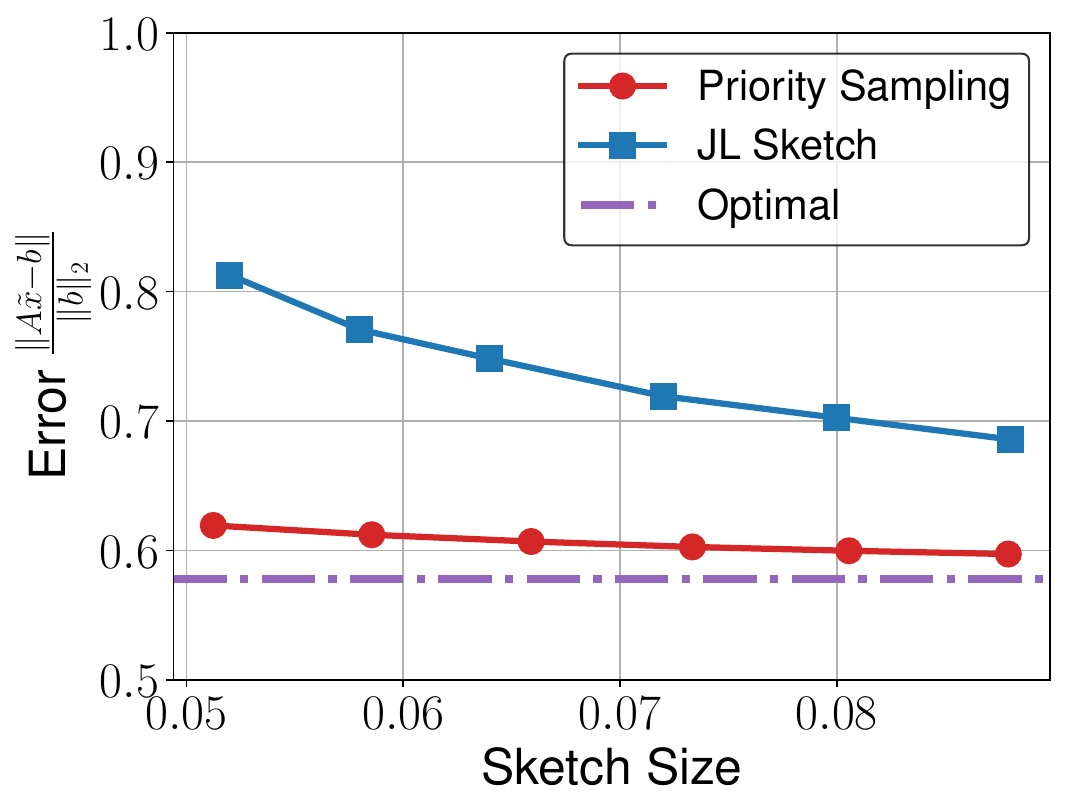}
        \caption{dimension 256.}
        \label{subfig:imdb256}
    \end{subfigure}
    \hfill    
    \begin{subfigure}[b]{0.32\linewidth}
        \centering
        \includegraphics[width=\textwidth]{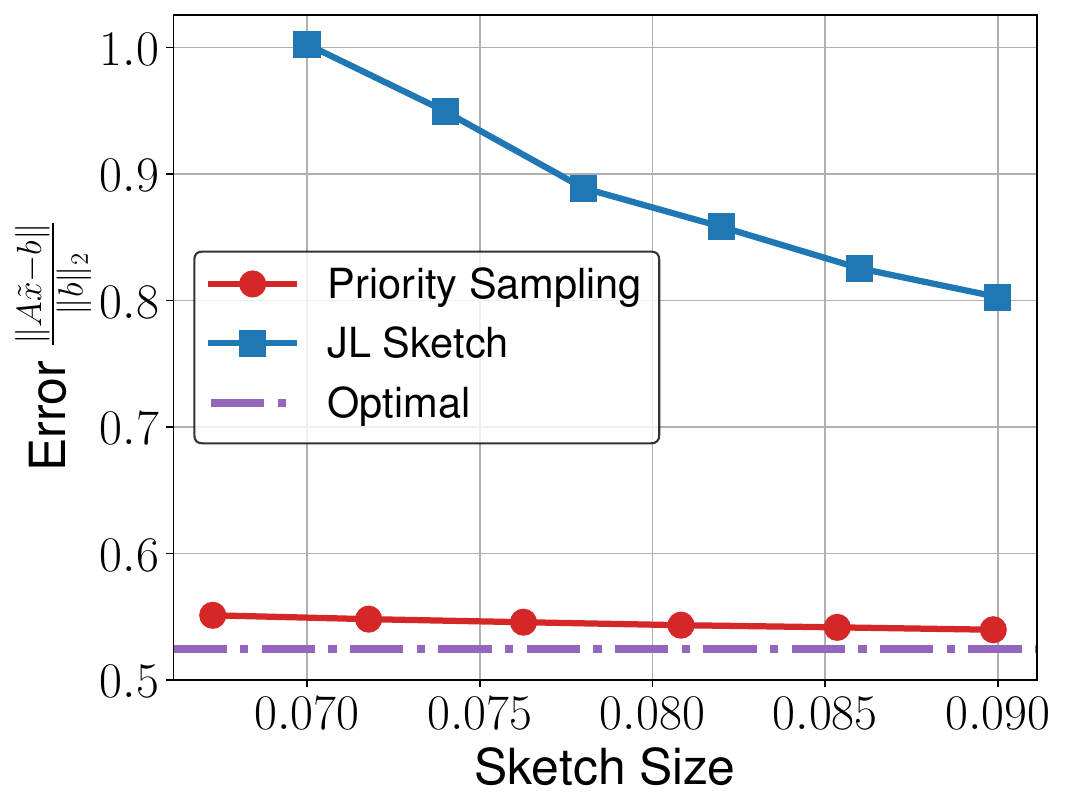}
        \caption{dimension 512.}
        \label{subfig:imdb512}
    \end{subfigure}
    \caption{Comparison of Regression Sketching Methods on the IMDB Dataset: The plots illustrate the approximation error of different sketching methods across various sketch sizes. The matrix $\bv{A}$ is generated using TF-IDF on 10,000 random reviews, keeping the top 256, 512, and 1024 features. As the dimensionality increases, the matrices become more sparse. The matrix $\bv{b}$ represents the sentiment scores (positivity or negativity) of the reviews.}
    \label{fig:imdb}
    \vspace{-0.2cm}
\end{figure}

\begin{figure}[ht]
    \centering
    \begin{subfigure}[b]{0.32\linewidth}
        \centering
        \includegraphics[width=\textwidth]{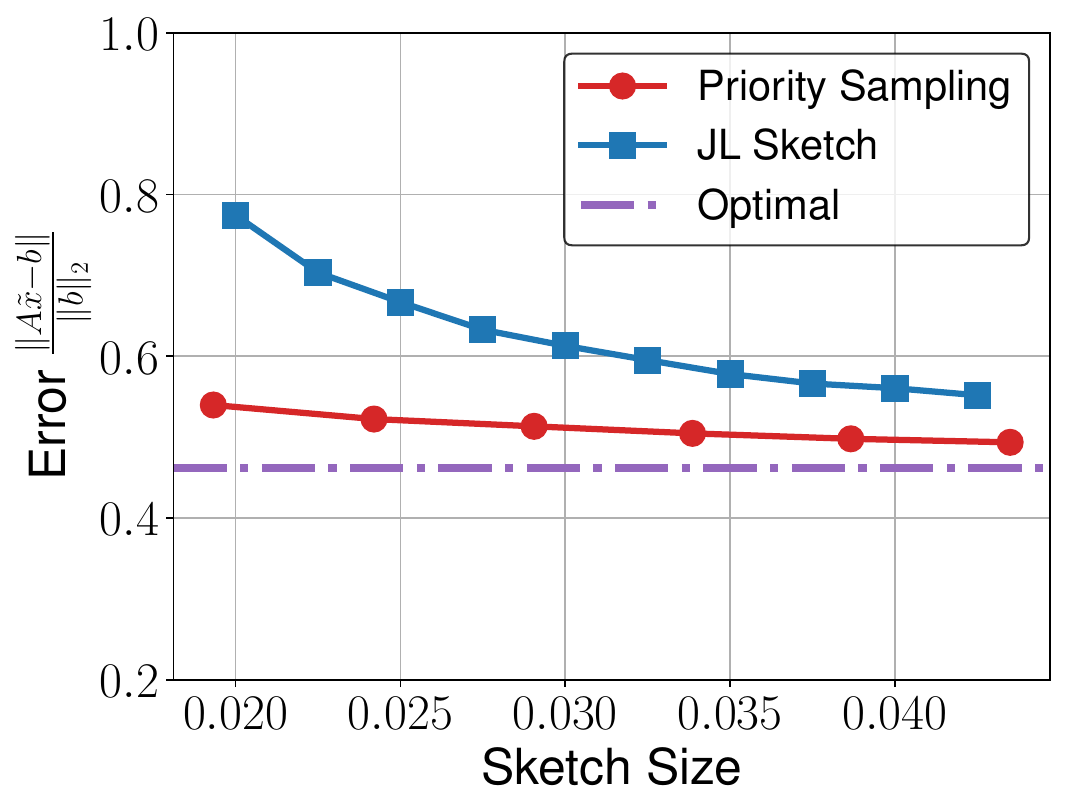}
        \caption{dimension 128.}
        \label{subfig:android128}
    \end{subfigure}
    \hfill    
    \begin{subfigure}[b]{0.32\linewidth}
        \centering
        \includegraphics[width=\textwidth]{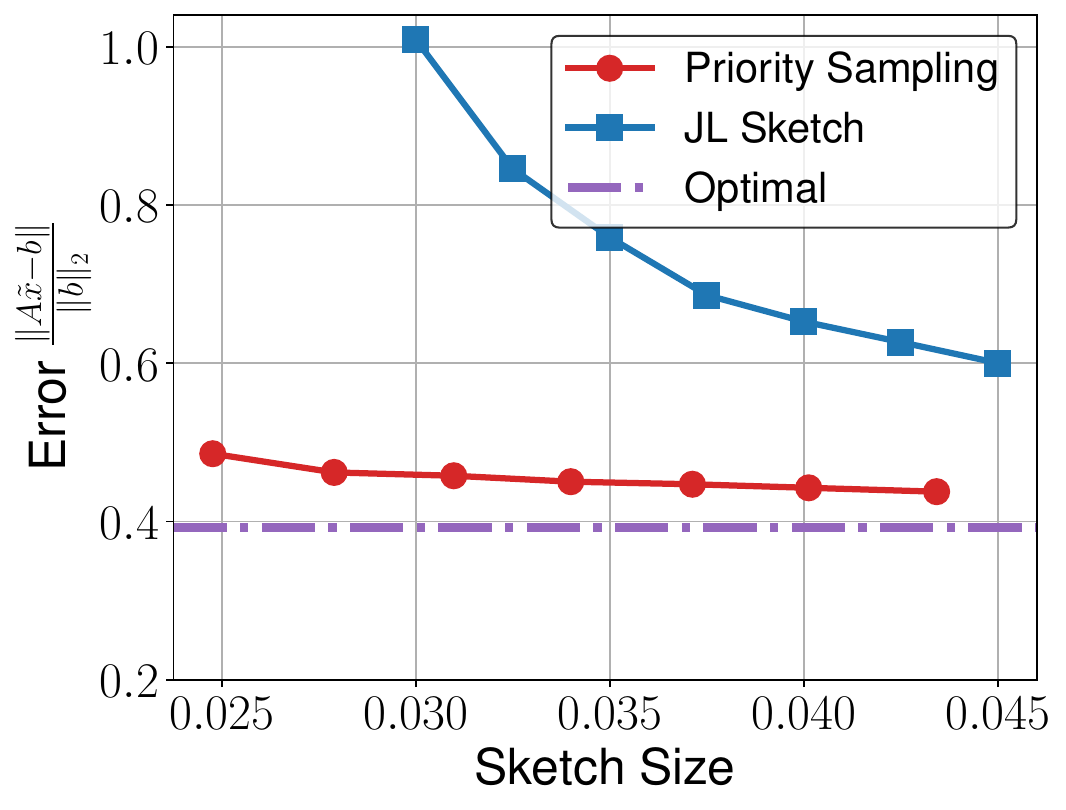}
        \caption{dimension 256.}
        \label{subfig:android256}
    \end{subfigure}
    \hfill    
    \begin{subfigure}[b]{0.32\linewidth}
        \centering
        \includegraphics[width=\textwidth]{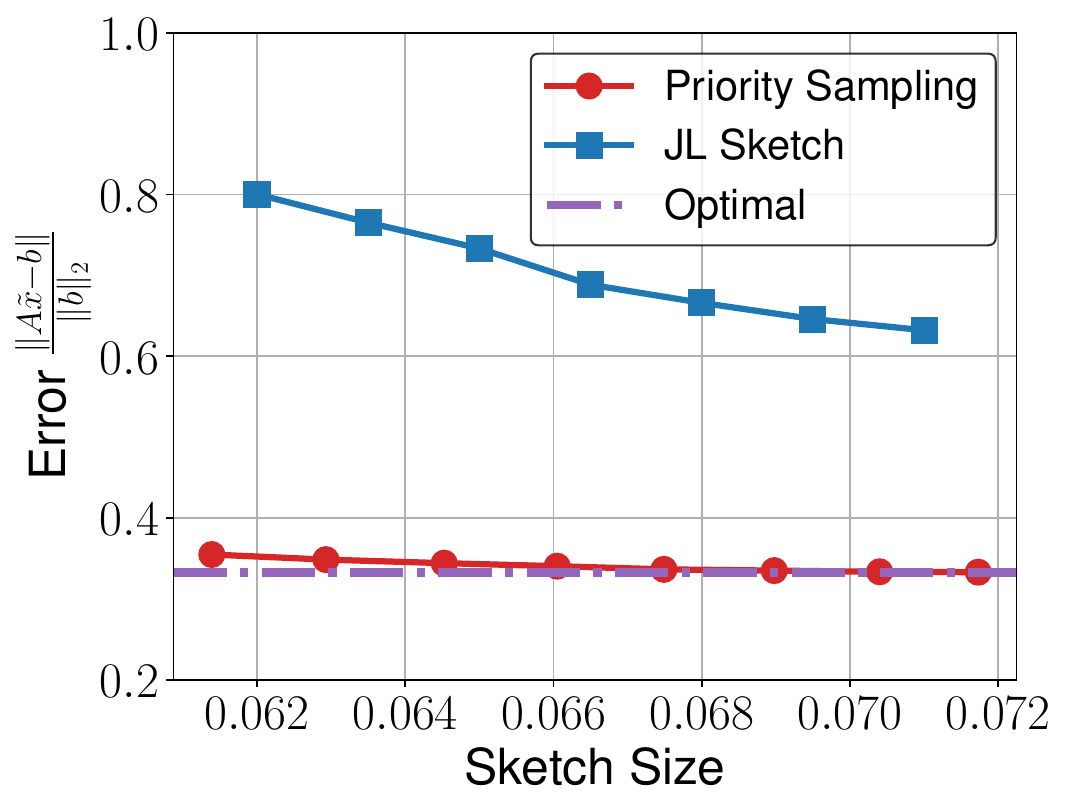}
        \caption{dimension 512.}
        \label{subfig:android512}
    \end{subfigure}
    \caption{Sketched Regression Methods on the Android Review Dataset: The plots illustrate the approximation error of different sketching methods across various sketch sizes. The matrix $\bv{A}$ is generated using sparse transformer SPLADE \citep{formal2022distillation} over 10,000 random reviews, retaining the top 128, 256, and 512 important features. The matrix $\bv{b}$ represents the review scores.}
    \label{fig:android}
\end{figure}

\paragraph{Datasets}
For the task of sketched regression, we use two primary datasets. The first dataset includes Android application reviews with user ratings, which reflect the quality of the applications \citep{ZurichOpenRepository:2017}. The second dataset contains IMDB movie reviews, labeled as positive or negative \citep{maas-EtAl:2011:ACL-HLT2011}.

We transform the reviews into sparse vector embeddings using TF-IDF and SPLADE \citep{formal2022distillation}. Regression analysis is then performed to explore the relationship between the vectorized reviews and their ratings (\cref{fig:android}) or sentiment labels (\cref{fig:imdb}).

Additionally, we evaluate our model on synthetic datasets (\cref{fig:synthetic}) for the task of approximate matrix multiplication $\bv{A}^T \bv{B}$. The entries of the matrices $\bv{A}$ and $\bv{B}$ are generated from a Gaussian distribution $N(0,1)$. However, $10\%$ of the dataset includes outliers, with values that are $10$ times higher. The choice of $10\%$ outliers reflects a moderate level of noise typically observed in real-world datasets, where outliers, while not dominating the data, can still have a substantial impact on the outcome. Testing with outliers allows us to assess the robustness of our model under such conditions. To examine how varying the sparsity of the matrices $\bv{A}$ and $\bv{B}$ affects approximation, we modify the number of non-zero entries. Both matrices are flattened, and we keep only a designated percentage of entries non-zero.

Alongside these datasets, we use the \citep{bai2023longbench} dataset to produce a long text prompt from its \texttt{MultiFieldQA} dataset for the task of KV cache in transformers. We compare our sketch as a quantization method for the Key to reduce cache usage (\cref{fig:attention}).

\paragraph{Sketching Size}
For sampling-based sketches, it is necessary to store the indices of the sampled rows. In contrast, for linear sketches, it is only needed to store the output of the projected matrix $\bv{\Pi \bv{A}}$. We report the total number of items needed for storage for all approaches and show the relative size of the sketches compared to the original matrices.

\paragraph{Estimation Error}
For the plots of matrix multiplication $\bv{A}$ and $\bv{B}$, we report the absolute difference between the estimated product and the true product, divided by $\|\bv{A}\|_F \|\bv{B}\|_F$. As stated in \cref{thrm:main}, this term appears on the right-hand side of the accuracy guarantee for approximate matrix multiplication.

For all plots regarding the regression between $\bv{A}$ and $\bv{b}$, we report the absolute difference between the estimated value and the true value, divided by $\|\bv{b}\|_2$.

\paragraph{Interpretation of Experimental Results}
As we can see, for both the tasks of matrix multiplication (\Cref{fig:synthetic}, \Cref{fig:attention}) and sketched regression (\Cref{fig:imdb}, \Cref{fig:android}), we have observed that as the matrices become sparser, our method improves over the best-known linear sketching methods. Even though our method needs to allocate some of its budget to store indices, it still outperforms JL methods.

\begin{figure}[ht]
    \centering
    \begin{subfigure}[b]{0.32\linewidth}
        \centering
        \includegraphics[width=\textwidth]{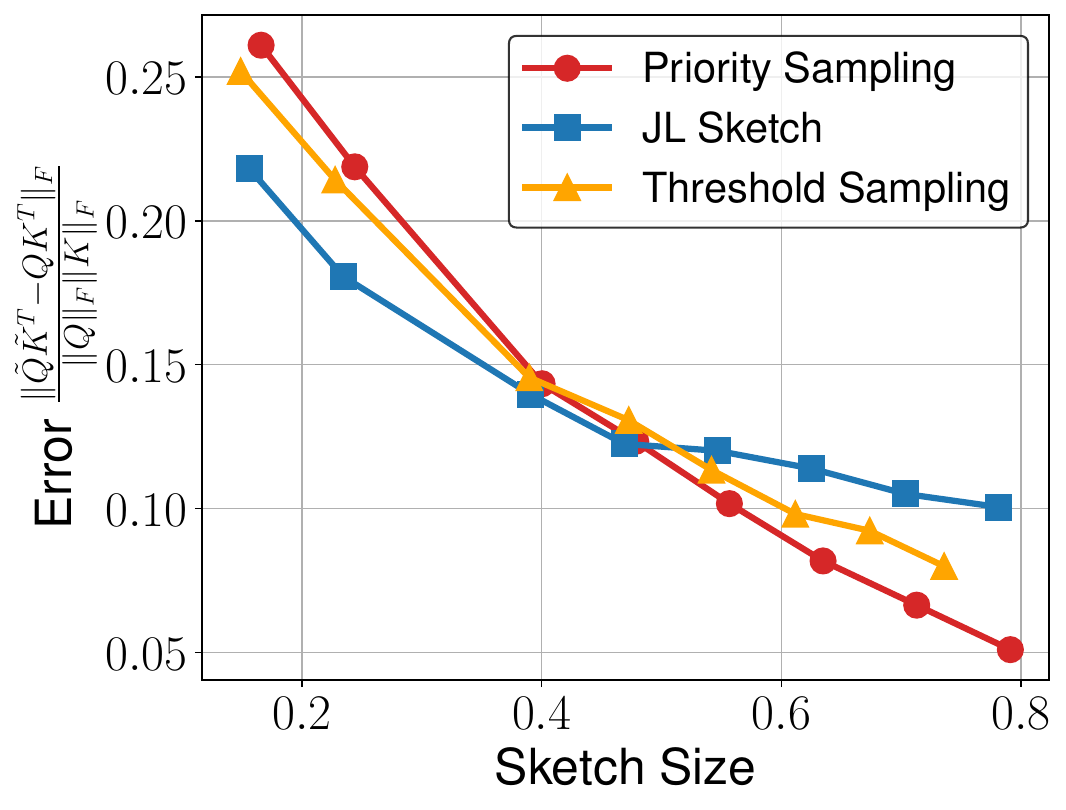}
        \caption{Layer 31}
        \label{subfig:attsketchqk0}
    \end{subfigure}
    \hfill 
    \begin{subfigure}[b]{0.32\linewidth}
        \centering
        \includegraphics[width=\textwidth]{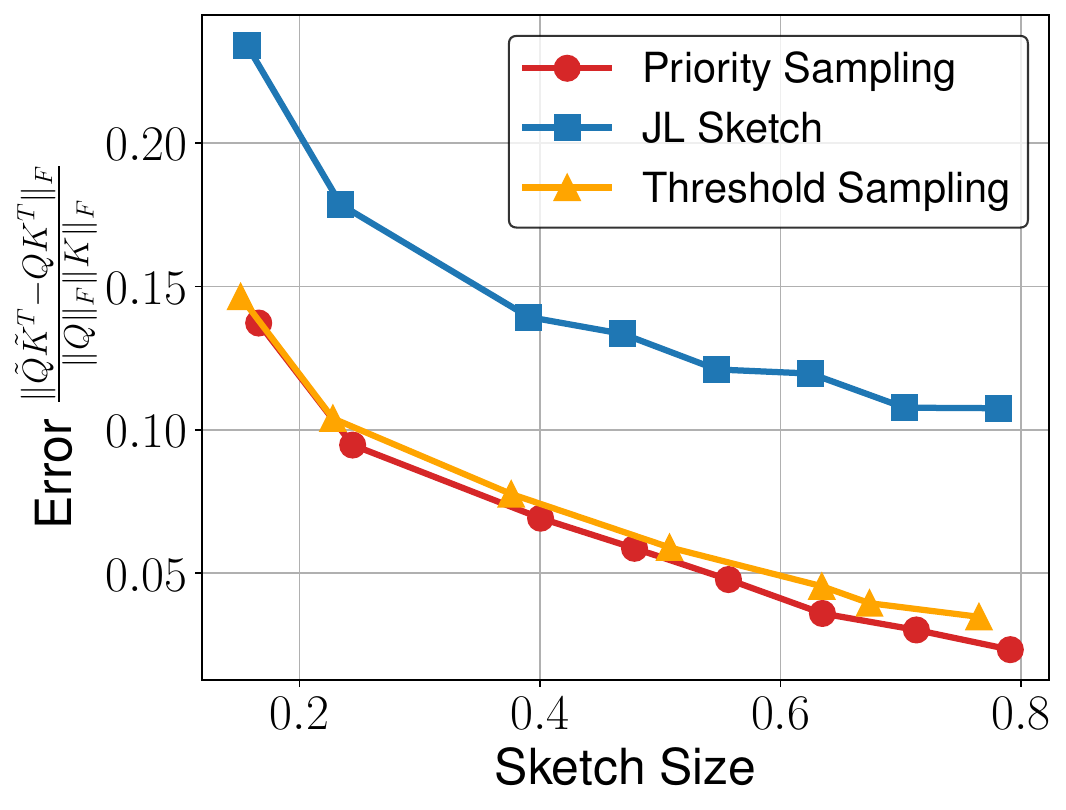}
        \caption{Layer 15}
        \label{subfig:attsketchqk15}
    \end{subfigure}
    \hfill    
    \begin{subfigure}[b]{0.32\linewidth}
        \centering
        \includegraphics[width=\textwidth]{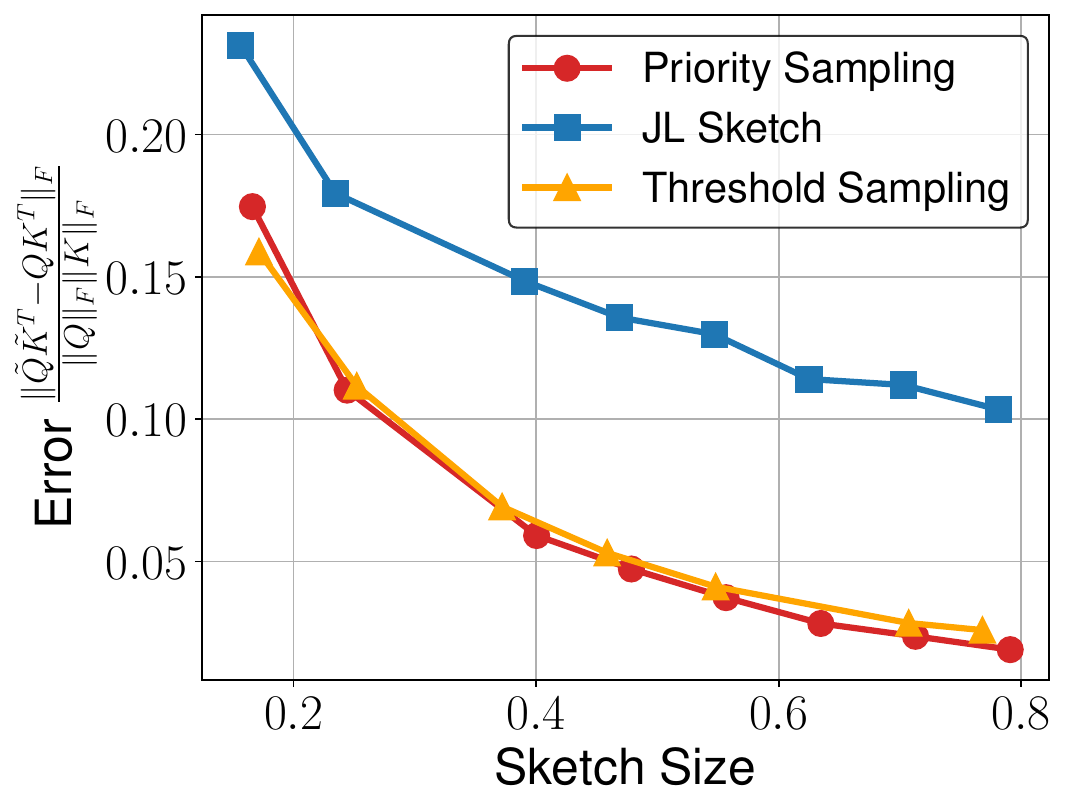}
        \caption{Layer 31}
        \label{subfig:attsketchqk31}
    \end{subfigure}
    \caption{Comparison of KV Cache Sketching Methods on the LongBench for \texttt{MultiFieldQA}: The plots show the accuracy of different sketching methods approximating $\bv{Q} \bv{K}^T$ across various sketch sizes. The matrices $\bv{Q}$ and $\bv{K}$ are generated from prompt tokens, and the approximation errors are displayed.}
    \label{fig:attention}
\end{figure}

\bibliography{paper}

\begin{thebibliography}{}

\bibitem[Bai et~al., 2023]{bai2023longbench}
Bai, Y., Lv, X., Zhang, J., Lyu, H., Tang, J., Huang, Z., Du, Z., Liu, X., Zeng, A., Hou, L., et~al. (2023).
\newblock Longbench: A bilingual, multitask benchmark for long context understanding.
\newblock {\em arXiv preprint arXiv:2308.14508}.

\bibitem[Batson et~al., 2012]{BatsonSpielmanSrivastava:2012}
Batson, J., Spielman, D.~A., and Srivastava, N. (2012).
\newblock Twice-{Ra}manujan sparsifiers.
\newblock {\em SIAM Journal on Computing}, 41(6):1704--1721.
\newblock \pSTOC{2009}.

\bibitem[Bessa et~al., 2023]{BessaDaliriFreireMusco:2023}
Bessa, A., Daliri, M., Freire, J., Musco, C., Musco, C., Santos, A., and Zhang, H. (2023).
\newblock Weighted minwise hashing beats linear sketching for inner product estimation.
\newblock In {\em \PODS{2023}}, pages 169--181.

\bibitem[Beyer et~al., 2007]{BeyerHaasReinwald:2007}
Beyer, K., Haas, P.~J., Reinwald, B., Sismanis, Y., and Gemulla, R. (2007).
\newblock On synopses for distinct-value estimation under multiset operations.
\newblock In {\em \SIGMOD{2007}}.

\bibitem[Broder et~al., 1998]{BroderCharikarFrieze:1998}
Broder, A.~Z., Charikar, M., Frieze, A.~M., and Mitzenmacher, M. (1998).
\newblock Min-wise independent permutations (extended abstract).
\newblock In {\em \STOC{1998}}.

\bibitem[Castelo et~al., 2021]{CasteloRampinSantos:2021}
Castelo, S., Rampin, R., Santos, A., Bessa, A., Chirigati, F., and Freire, J. (2021).
\newblock Auctus: a dataset search engine for data discovery and augmentation.
\newblock {\em Proc. VLDB Endow.}, 14(12).

\bibitem[Charikar et~al., 2002]{CharikarChenFarach-Colton:2002}
Charikar, M., Chen, K., and Farach-Colton, M. (2002).
\newblock Finding frequent items in data streams.
\newblock In {\em \ICALP{2002}}, pages 693--703.

\bibitem[Chen and Price, 2019]{ChenPrice:2019a}
Chen, X. and Price, E. (2019).
\newblock Active regression via linear-sample sparsification active regression via linear-sample sparsification.
\newblock In {\em \COLT{2019}}.

\bibitem[Chepurko et~al., 2020]{ChepurkoMarcusZgraggen:2020}
Chepurko, N., Marcus, R., Zgraggen, E., Fernandez, R.~C., Kraska, T., and Karger, D. (2020).
\newblock Arda: automatic relational data augmentation for machine learning.
\newblock {\em Proc. VLDB Endow.}, 13(9).

\bibitem[Clarkson and Woodruff, 2013]{ClarksonWoodruff:2013}
Clarkson, K.~L. and Woodruff, D.~P. (2013).
\newblock Low rank approximation and regression in input sparsity time.
\newblock In {\em \STOC{2013}}, pages 81--90.

\bibitem[Cohen, 2023]{Cohen:2023}
Cohen, E. (2023).
\newblock Sampling big ideas in query optimization.
\newblock In {\em \PODS{2023}}.

\bibitem[Cohen and Kaplan, 2007]{CohenKaplan:2007}
Cohen, E. and Kaplan, H. (2007).
\newblock Summarizing data using bottom-k sketches.
\newblock In {\em \PODC{2007}}.

\bibitem[Cohen and Kaplan, 2013]{CohenKaplan:2013}
Cohen, E. and Kaplan, H. (2013).
\newblock What you can do with coordinated samples.
\newblock In {\em \APPROX{2013}}.

\bibitem[Cohen et~al., 2015a]{CohenElderMusco:2015}
Cohen, M., Elder, S., Musco, C., Musco, C., and Persu, M. (2015a).
\newblock Dimensionality reduction for $k$-means clustering and low rank approximation.
\newblock In {\em \STOC{2015}}, pages 163--172.

\bibitem[Cohen et~al., 2015b]{CohenLeeMusco:2015}
Cohen, M.~B., Lee, Y.~T., Musco, C., Musco, C., Peng, R., and Sidford, A. (2015b).
\newblock Uniform sampling for matrix approximation.
\newblock In {\em \ITCS{2015}}, pages 181--190.

\bibitem[Cohen et~al., 2016]{CohenNelsonWoodruff:2016}
Cohen, M.~B., Nelson, J., and Woodruff, D.~P. (2016).
\newblock Optimal approximate matrix product in terms of stable rank.
\newblock In {\em \ICALP{2016}}.

\bibitem[Cormode et~al., 2011]{CormodeGarofalakisHaas:2011}
Cormode, G., Garofalakis, M., Haas, P., and Jermaine, C. (2011).
\newblock {\em Synopses for Massive Data: Samples, Histograms, Wavelets, Sketches}.
\newblock Foundations and Trends in Databases. NOW publishers.

\bibitem[Daliri et~al., 2024a]{dalirisampling:2024}
Daliri, M., Freire, J., Musco, C., Santos, A., and Zhang, H. (2024a).
\newblock Sampling methods for inner product sketching.

\bibitem[Daliri et~al., 2024b]{DaliriFreireMusco:2024}
Daliri, M., Freire, J., Musco, C., Santos, A., and Zhang, H. (2024b).
\newblock Simple analysis of priority sampling.
\newblock In {\em 2024 Symposium on Simplicity in Algorithms (SOSA)}, pages 224--229. SIAM.

\bibitem[Dhulipala et~al., 2024]{DhulipalaHadianJayaram:2024}
Dhulipala, L., Hadian, M., Jayaram, R., Lee, J., and Mirrokni, V. (2024).
\newblock {MUVERA:} multi-vector retrieval via fixed dimensional encodings.
\newblock {\em \arXiv{2405.19504}}.

\bibitem[Drineas and Kannan, 2001]{DrineasKannan:2001}
Drineas, P. and Kannan, R. (2001).
\newblock Fast {Monte-Carlo} algorithms for approximate matrix multiplication.
\newblock In {\em \FOCS{2001}}, pages 452--459.

\bibitem[Drineas et~al., 2006a]{DrineasKannanMahoney:2006}
Drineas, P., Kannan, R., and Mahoney, M.~W. (2006a).
\newblock Fast {Monte Carlo} algorithms for matrices {I}: Approximating matrix multiplication.
\newblock {\em SIAM Journal on Computing}, 36(1):132--157.
\newblock \pFOCS{2001}.

\bibitem[Drineas and Mahoney, 2016]{DrineasMahoney:2016}
Drineas, P. and Mahoney, M.~W. (2016).
\newblock Rand{NLA}: Randomized numerical linear algebra.
\newblock {\em Commun. ACM}, 59(6).

\bibitem[Drineas et~al., 2006b]{DrineasMahoneyMuthukrishnan:2006}
Drineas, P., Mahoney, M.~W., and Muthukrishnan, S. (2006b).
\newblock Sampling algorithms for {$\ell_2$} regression and applications.
\newblock In {\em \SODA{2006}}, pages 1127--1136.

\bibitem[Duffield et~al., 2004]{DuffieldLundThorup:2004}
Duffield, N., Lund, C., and Thorup, M. (2004).
\newblock Flow sampling under hard resource constraints.
\newblock In {\em Proceedings of the Joint International Conference on Measurement and Modeling of Computer Systems (SIGMETRICS 2004)}.

\bibitem[Duffield et~al., 2007]{DuffieldLundThorup:2007}
Duffield, N.~G., Lund, C., and Thorup, M. (2007).
\newblock Priority sampling for estimation of arbitrary subset sums.
\newblock {\em J. {ACM}}, 54(6).

\bibitem[Engels et~al., 2023]{EngelsColemanLakshman:2023}
Engels, J., Coleman, B., Lakshman, V., and Shrivastava, A. (2023).
\newblock {DESSERT:} an efficient algorithm for vector set search with vector set queries.
\newblock In {\em \NIPS{2023}}.

\bibitem[Formal et~al., 2022]{formal2022distillation}
Formal, T., Lassance, C., Piwowarski, B., and Clinchant, S. (2022).
\newblock From distillation to hard negative sampling: Making sparse neural ir models more effective.
\newblock In {\em Proceedings of the 45th international ACM SIGIR conference on research and development in information retrieval}, pages 2353--2359.

\bibitem[Grano et~al., 2017]{ZurichOpenRepository:2017}
Grano, G., Di~Sorbo, A., Mercaldo, F., Visaggio, C.~A., Canfora, G., and Panichella, S. (2017).
\newblock Software applications user reviews.

\bibitem[Hooper et~al., 2024]{hooper2024kvquant}
Hooper, C., Kim, S., Mohammadzadeh, H., Mahoney, M.~W., Shao, Y.~S., Keutzer, K., and Gholami, A. (2024).
\newblock Kvquant: Towards 10 million context length llm inference with kv cache quantization.

\bibitem[Ioffe, 2010]{Ioffe:2010}
Ioffe, S. (2010).
\newblock Improved consistent sampling, weighted minhash and l1 sketching.
\newblock In {\em \ICDM{2010}}.

\bibitem[Ionescu et~al., 2022]{IonescuHaiFragkoulis:2022}
Ionescu, A., Hai, R., Fragkoulis, M., and Katsifodimos, A. (2022).
\newblock Join path-based data augmentation for decision trees.
\newblock In {\em 2022 IEEE 38th International Conference on Data Engineering Workshops (ICDEW)}. IEEE.

\bibitem[Jiang et~al., 2018]{JiangFuYang:2018}
Jiang, J., Fu, F., Yang, T., and Cui, B. (2018).
\newblock {SketchML:} accelerating distributed machine learning with data sketches.
\newblock In {\em \SIGMOD{2018}}, pages 1269--1284.

\bibitem[Kane and Nelson, 2014]{KaneNelson:2014}
Kane, D.~M. and Nelson, J. (2014).
\newblock Sparser {J}ohnson-{L}indenstrauss transforms.
\newblock {\em Journal of the ACM}, 61(1):4.
\newblock \pSODA{2012}.

\bibitem[Kone{\v c}n{\'y} et~al., 2020]{KonecnyMcMahanYu:2020}
Kone{\v c}n{\'y}, J., McMahan, H.~B., Yu, F.~X., Richt{\'a}rik, P., Suresh, A.~T., and Bacon, D. (2020).
\newblock Federated learning: Strategies for improving communication efficiency.
\newblock {\em \arXiv{2005.10009}}.

\bibitem[Le~Gall, 2012]{Le-Gall:2012}
Le~Gall, F. (2012).
\newblock Faster algorithms for rectangular matrix multiplication.
\newblock In {\em \FOCS{2012}}, pages 514--523.

\bibitem[Liu et~al., 2022]{LiuChaiLuo:2022}
Liu, J., Chai, C., Luo, Y., Lou, Y., Feng, J., and Tang, N. (2022).
\newblock Feature augmentation with reinforcement learning.
\newblock In {\em \ICDE{2022}}.

\bibitem[Maas et~al., 2011]{maas-EtAl:2011:ACL-HLT2011}
Maas, A.~L., Daly, R.~E., Pham, P.~T., Huang, D., Ng, A.~Y., and Potts, C. (2011).
\newblock Learning word vectors for sentiment analysis.
\newblock In {\em Proceedings of the 49th Annual Meeting of the Association for Computational Linguistics: Human Language Technologies}.

\bibitem[Mahoney et~al., 2012]{MahoneyDrineasMagdon-Ismail:2012}
Mahoney, M.~W., Drineas, P., Magdon{-}Ismail, M., and Woodruff, D.~P. (2012).
\newblock Fast approximation of matrix coherence and statistical leverage.
\newblock {\em Journal of Machine Learning Research}, 13:3475--3506.
\newblock \pICML{2012}.

\bibitem[Manasse et~al., 2010]{ManasseMcSherryTalwar:2010}
Manasse, M., McSherry, F., and Talwar, K. (2010).
\newblock Consistent weighted sampling.
\newblock Technical Report MSR-TR-2010-73, Microsoft Research.

\bibitem[Martinsson and Tropp, 2020]{MartinssonTropp:2020}
Martinsson, P.-G. and Tropp, J.~A. (2020).
\newblock Randomized numerical linear algebra: Foundations and algorithms.
\newblock {\em Acta Numerica}, 29:403--572.

\bibitem[Nelson and Nguyen, 2013]{NelsonNguyen:2013}
Nelson, J. and Nguyen, H.~L. (2013).
\newblock {OSNAP}: Faster numerical linear algebra algorithms via sparser subspace embeddings.
\newblock In {\em \FOCS{2013}}, pages 117--126.

\bibitem[Ohlsson, 1998]{Ohlsson:1998}
Ohlsson, E. (1998).
\newblock Sequential poisson sampling.
\newblock {\em Journal of Official Statistics}, 14(2).

\bibitem[Rothchild et~al., 2020]{RothchildPandaUllah:2020}
Rothchild, D., Panda, A., Ullah, E., Ivkin, N., Stoica, I., Braverman, V., Gonzalez, J., and Arora, R. (2020).
\newblock {FetchSGD}: communication-efficient federated learning with sketching.
\newblock In {\em \ICML{2020}}.

\bibitem[Santos et~al., 2021]{SantosBessaChirigati:2021}
Santos, A., Bessa, A., Chirigati, F., Musco, C., and Freire, J. (2021).
\newblock Correlation sketches for approximate join-correlation queries.
\newblock In {\em \SIGMOD{2021}}.

\bibitem[Santos et~al., 2022]{SantosBessaMusco:2022}
Santos, A., Bessa, A., Musco, C., and Freire, J. (2022).
\newblock A sketch-based index for correlated dataset search.
\newblock In {\em \ICDE{2022}}.

\bibitem[Sarlós, 2006]{Sarlos:2006}
Sarlós, T. (2006).
\newblock Improved approximation algorithms for large matrices via random projections.
\newblock In {\em \FOCS{2006}}, pages 143--152.

\bibitem[Woodruff, 2014]{Woodruff:2014}
Woodruff, D.~P. (2014).
\newblock Sketching as a tool for numerical linear algebra.
\newblock {\em Foundations and Trends in Theoretical Computer Science}, 10(1--2):1--157.

\bibitem[Zandieh et~al., 2024]{zandieh2024qjl1bitquantizedjl}
Zandieh, A., Daliri, M., and Han, I. (2024).
\newblock Qjl: 1-bit quantized jl transform for kv cache quantization with zero overhead.

\bibitem[{Zirui Liu} et~al., 2023]{LiuYuan:2024}
{Zirui Liu}, {Jiayi Yuan}, {Hongye Jin}, {Shaochen Zhong}, {Zhaozhuo Xu}, Braverman, V., {Beidi Chen}, and Hu, X. (2023).
\newblock Kivi : Plug-and-play 2bit kv cache quantization with streaming asymmetric quantization.

\end{thebibliography}

\bibliographystyle{apalike}
\pagebreak
\appendix
\section{Supporting Proofs}
\label{app:sup_proof:proof_innerprod_columns}
We begin by proving \Cref{lemma:innerprod_columns}, which was the key result used in proving our main matrix product sketching result. 
\begin{proof}[Proof of \Cref{lemma:innerprod_columns}]
For any $x,y\in [d]\times [m]$ we can write the $x,y$ entry of our estimate $\bv{W}$ as:
\begin{align*}
\bv{W}_{x,y} = \sum_{i \in \mathcal{T}} \frac{\bv{A}_{i,x}\bv{B}_{i,y}}{\min(1, \|\bv{A}_{i}\|_2^2 \cdot \tau_{\bv{A}}, \|\bv{B}_{i}\|_2^2 \cdot \tau_{\bv{B}})}.
\end{align*}
Recall that we aim to prove the following:
\begin{align*}
\E\left[\bv{W}_{x,y}\right] = \left[\bv{A}^T\bv{B}\right]_{x,y} \text{ and } \E[(\bv{W}_{x,y} - \left[\bv{A}^T\bv{B}\right]_{x,y})^2] \leq  \sum_{i=1}^n \frac{\bv{A}_{i,x}^2\bv{B}_{i,y}^2 }{k-1} \left(\frac{\|\bv{A}\|_F^2}{\|\bv{A}_i\|_2^2} + \frac{\|\bv{B}\|_F^2}{\|\bv{B}_i\|_2^2}\right).
\end{align*}

For any $i$, let $\mathbbm{1}$ be a 0-1 indicator random variable for the event that index $i$ is selected for both $\mathcal{S}(\bv{A})$ and $\mathcal{S}(\bv{B})$. I.e., for the event that $i\in \mathcal{T}$. 

For any $i\in \left[n\right]$, let $\tau_{\bv{A}}^i$ denote the $k^\text{th}$ smallest value of $h(j)/\|\bv{A}_j\|_2^2$ over all $j\in \left[n\right]\setminus\{i\}$. If $\left[n\right]\setminus\{i\}$ has fewer than $k$ values, define $\tau_{\bv{a}}^i = \infty$. Define $\tau_{\bv{B}}^i$ analogously. The probability that $i \in \mathcal{T} = \mathcal{I}_{\bv{A}} \cap \mathcal{I}_{\bv{B}}$ conditioned on $\tau^i(\bv{A})$ and $\tau^i(\bv{B})$ is equal to the probability that \emph{both} $h(i)/\|\bv{A}_i\|_2^2 \leq \tau^i_{\bv{A}}$ and $h(i)/\|\bv{B}_i\|_2^2 \leq \tau^i_{\bv{B}}$. I.e., the conditional probability is equal to $\min(1, \|\bv{A}_{i}\|_2^2 \cdot \tau^i_{\bv{A}}, \|\bv{B}_{i}\|_2^2 \cdot \tau^i_{\bv{B}})$.

Additionally, observe that, for all sampled $i \in \mathcal{T} = \mathcal{I}_{\bv{A}} \cap \mathcal{I}_{\bv{B}}$, $\tau_{\bv{A}}^i = \tau_{\bv{A}}$ and $\tau_{\bv{B}}^i = \tau_{\bv{B}}$. 
So, we have:
\begin{align*}
    &\E\left[\frac{\bv{A}_{i,x} \bv{B}_{i,y}}{\min(1, \|\bv{A}_{i}\|_2^2 \cdot \tau_{\bv{A}}, \|\bv{B}_{i}\|_2^2 \cdot \tau_{\bv{B}})}\mathbbm{1}_i\right]\\ &\hspace{7em}= \E_{\tau^i_{\bv{A}},\tau^i_{\bv{B}}}\left[\frac{ \bv{A}_{i,x} \bv{B}_{i,y}}{\min(1, \|\bv{A}_{i}\|_2^2 \cdot \tau_{\bv{A}}^i, \|\bv{B}_{i}\|_2^2 \cdot \tau_{\bv{B}}^i)}\cdot \Pr\left[i \in \mathcal{T}\mid \tau^i_{\bv{A}},\tau^i_{\bv{B}}\right] \right]
    \\&\hspace{7em}= \bv{A}_{i,x} \bv{B}_{i,y}.
\end{align*}
By linearity of expectation, it follows that $\E\left[\bv{W}_{x,y}\right] = \sum_{i=1}^{n} \E\left[\frac{\bv{A}_{i,x} \bv{B}_{i,y} }{\min(1, \|\bv{A}_{i}\|_2^2 \cdot \tau^i_{\bv{A}}, \|\bv{B}_{i}\|_2^2 \cdot \tau^i_{\bv{B}})}\mathbbm{1}_i\right] = \sum_{i=1}^n \bv{A}_{i,x} \bv{B}_{i,y} = \left[\bv{A}^T\bv{B}\right]_{x,y}$, as desired.

The next step is to bound the expected squared error. As mentioned earlier, this is made difficult by the fact that $\mathbbm{1}_i$ and $\mathbbm{1}_j$ are \emph{not} independent random variables. Fortunately, however, we can show that appropriate (random) scalings of these random variables are \emph{uncorrelated}, a technique that is standard in prior analyses of priority sampling for other problems.

In particular, define $p_i = \min(1, \|\bv{A}_i\|_2^2 \cdot \tau_{\bv{A}}, \|\bv{B}_i\|_2^2 \cdot \tau_{\bv{B}})$ and 
define $\tau^{i,j}_{\bv{A}}$ to be the $(k-1)^\text{st}$ smallest value among $\frac{h(k)}{\|\bv{A}_k\|_2}$ for all $k \in \left[n\right] \setminus \left\{i, j\right\}$. Define $\tau^{i,j}_{\bv{B}}$ analogously. We can see that $\Pr[i, j \in \mathcal{T} \mid \tau^{i,j}_{\bv{A}},\tau^{i,j}_{\bv{B}}] = \min(1, \|\bv{A}_i\|_2^2 \cdot \tau^{i,j}_{\bv{A}}, \|\bv{B}_i\|_2^2 \cdot \tau^{i,j}_{\bv{B}}) \cdot \min(1, \|\bv{A}_j\|_2^2 \cdot \tau^{i,j}_{\bv{A}}, \|\bv{B}_j\|_2^2 \cdot \tau^{i,j}_{\bv{B}})$. Moreover, conditioned on $i,j\in \mathcal{T}$, we have that $\tau^{i,j}(\bv{A}) = \tau(\bv{A})$ and $\tau^{i,j}(\bv{B}) = \tau(\bv{B})$. 
So, in particular, conditioned on $i,j\in \mathcal{T}$, $p_i = \min(1, \|\bv{A}_i\|_2^2 \cdot \tau^{i,j}_{\bv{A}}, \|\bv{B}_i\|_2^2 \cdot \tau^{i,j}_{\bv{B}})$.
\begin{align*}
\E\left[\frac{\mathbbm{1}_i}{p_i}  \frac{\mathbbm{1}_j}{p_j}\right] &= \E_{\tau^{i,j}_{\bv{A}},\tau^{i,j}_{\bv{B}}}\Bigg[\frac{1}{\min(1, \|\bv{A}_i\|_2^2 \cdot \tau^{i,j}_{\bv{A}}, \|\bv{B}_i\|_2^2 \cdot \tau^{i,j}_{\bv{B}})} \frac{1}{\min(1, \|\bv{A}_j\|_2^2 \cdot \tau^{i,j}_{\bv{A}}, \|\bv{B}_j\|_2^2 \cdot \tau^{i,j}_{\bv{B}})} \\
&\hspace{20em}\cdot \Pr\left[i,j\in \mathcal{T}\mid \tau^{i,j}_{\bv{A}},\tau^{i,j}_{\bv{B}}\right]\Bigg] \\
&= 1 = \E\left[\frac{\mathbbm{1}_i}{p_i}\right] \E\left[\frac{\mathbbm{1}_j}{p_j}\right].
\end{align*}

Since the random variables $\frac{\mathbbm{1}_i}{p_i}$ and $\frac{\mathbbm{1}_j}{p_j}$ are pairwise uncorrelated for all $i,j$, we have that  $\bv{A}_{i,x}\bv{B}_{i,y}\cdot \frac{\mathbbm{1}_i}{p_i}$ and $\bv{A}_{j,x}\bv{B}_{j,y}\cdot \frac{\mathbbm{1}_j}{p_j}$ are pairwise uncorrelated as well.  So, we can apply the linearity of variance to conclude:
\begin{align*}
\E[(\bv{W}_{x,y} -\left[\bv{A}^T\bv{B}\right]_{x,y})^2] =  \Var[\bv{W}_{x,y}]  &=  \Var\left[\sum_{i=1}^{n} \frac{\bv{A}_{i,x}\bv{B}_{i,y}}{\min(1, \|\bv{A}_i\|_2^2 \cdot \tau_{\bv{A}}, \|\bv{B}_i\|_2^2 \cdot \tau_{\bv{B}})}\cdot\mathbbm{1}_i\right]\\
&=\sum_{i=1}^{n} \Var\left[\frac{\bv{A}_{i,x}\bv{B}_{i,y}}{\min(1, \|\bv{A}_i\|_2^2 \cdot \tau_{\bv{A}}, \|\bv{B}_i\|_2^2 \cdot \tau_{\bv{B}})}\cdot\mathbbm{1}_i\right].
\end{align*}
So, it suffices to establish individual bounds on $\Var\left[\frac{\bv{A}_{i,x}\bv{B}_{i,y}}{\min(1, \|\bv{A}_i\|_2^2 \cdot \tau_{\bv{A}}, \|\bv{B}_i\|_2^2 \cdot \tau_{\bv{B}})} \cdot \mathbbm{1}_i\right]$. In order to do so, first observe that, conditioned on $\tau^i_{\bv{A}}, \tau^i_{\bv{B}}$,
\begin{align*}
&\E\left[\left(\frac{\bv{A}_{i,x}\bv{B}_{i,y}}{\min(1, \|\bv{A}_i\|_2^2 \cdot \tau_{\bv{A}}, \|\bv{B}_i\|_2^2 \cdot \tau_{\bv{B}})}\cdot\mathbbm{1}_i\right)^2\mid \tau^i_{\bv{A}},\tau^i_{\bv{B}}\right] \\
&= \left(\frac{\bv{A}_{i,x}\bv{B}_{i,y}}{\min(1, \|\bv{A}_i\|_2^2 \cdot \tau^i_{\bv{A}}, \|\bv{B}_i\|_2^2 \cdot \tau^i_{\bv{B}})}\right)^2 \cdot \min(1, \|\bv{A}_i\|_2^2 \cdot \tau^i_{\bv{A}}, \|\bv{B}_i\|_2^2 \cdot \tau^i_{\bv{B}})
\\ &= \frac{\bv{A}_{i,x}^2\bv{B}_{i,y}^2}{\min(1, \|\bv{A}_i\|_2^2 \cdot \tau^i_{\bv{A}}, \|\bv{B}_i\|_2^2 \cdot \tau^i_{\bv{B}})} = \bv{A}_{i,x}^2\bv{B}_{i,y}^2 \cdot \max\left(1, \frac{1}{\|\bv{A}_i\|_2^2 \cdot \tau^i_{\bv{A}}}, \frac{1}{\|\bv{B}_i\|_2^2 \cdot \tau^i_{\bv{B}}}\right).
\end{align*}

We can thus write:
\begin{align*}
\Var&\left[\frac{\bv{A}_{i,x}\bv{B}_{i,y}}{\min(1, \|\bv{A}_i\|_2^2 \cdot \tau_{\bv{A}}, \|\bv{B}_i\|_2^2 \cdot \tau_{\bv{B}})}\cdot\mathbbm{1}_i\right] \\
&= \bv{A}_{i,x}^2\bv{B}_{i,y}^2 \cdot \E\left[\max\left(1, \frac{1}{\|\bv{A}_i\|_2^2 \cdot \tau^i_{\bv{A}}}, \frac{1}{\|\bv{B}_i\|_2^2 \cdot \tau^i_{\bv{B}}}\right)\right] - \bv{A}_{i,x}^2\bv{B}_{i,y}^2\\
&=\bv{A}_{i,x}^2\bv{B}_{i,y}^2 \cdot \E\left[\max\left(0, \frac{1}{\|\bv{A}_i\|_2^2 \cdot \tau^i_{\bv{A}}}-1, \frac{1}{\|\bv{B}_i\|_2^2 \cdot \tau^i_{\bv{B}}}-1\right)\right] \\
&\leq  \bv{A}_{i,x}^2\bv{B}_{i,y}^2 \cdot \E\left[\max\left(\frac{1}{\|\bv{A}_i\|_2^2 \cdot \tau^i_{\bv{A}}}, \frac{1}{\|\bv{B}_i\|_2^2 \cdot \tau^i_{\bv{B}}}\right)\right]\\
&\leq  \bv{A}_{i,x}^2\bv{B}_{i,y}^2 \cdot \E\left[\frac{1}{\|\bv{A}_i\|_2^2 \cdot \tau^i_{\bv{A}}}\right] + \bv{A}_{i,x}^2\bv{B}_{i,y}^2 \cdot\E\left[\frac{1}{\|\bv{B}_i\|_2^2 \cdot \tau^i_{\bv{B}}}\right]\\
&=  \frac{\bv{A}_{i,x}^2\bv{B}_{i,y}^2}{\|\bv{A}_i\|_2^2} \cdot \E\left[\frac{1}{ \tau^i_{\bv{A}}}\right] + \frac{\bv{A}_{i,x}^2\bv{B}_{i,y}^2}{\|\bv{B}_i\|_2^2}\cdot \E\left[\frac{1}{\tau^i_{\bv{B}}}\right].
\end{align*}

We can apply Claim 5 from \cite{DaliriFreireMusco:2024} to bound $\E\left[\frac{1}{\tau^i_{\bv{A}}}\right] \leq \frac{\|\bv{A}\|_F^2}{k-1}$ and $\E\left[\frac{1}{ \tau^i_{\bv{B}}}\right] \leq \frac{\|\bv{B}\|_F^2}{k-1}$. So we have:
\begin{align*}
\E[(\bv{W}_{x,y} -\left[\bv{A}^T\bv{B}\right]_{x,y})^2] 
&\leq \sum_{i =1}^n \frac{\bv{A}_{i,x}^2\bv{B}_{i,y}^2}{\|\bv{A}_i\|_2^2} \cdot \E\left[\frac{1}{ \tau^i_{\bv{A}}}\right] + \frac{\bv{A}_{i,x}^2\bv{B}_{i,y}^2}{\|\bv{B}_i\|_2^2}\cdot \E\left[\frac{1}{\tau^i_{\bv{B}}})\right]\\
&\leq 
\sum_{i=1}^n \frac{\bv{A}_{i,x}^2\bv{B}_{i,y}^2}{\|\bv{A}_i\|_2^2} \cdot \frac{\|\bv{A}\|_F^2}{k-1} + \frac{\bv{A}_{i,x}^2\bv{B}_{i,y}^2}{\|\bv{B}_i\|_2^2}\cdot \frac{\|\bv{B}\|_F^2}{k-1} \\
&= \sum_{i=1}^n \frac{\bv{A}_{i,x}^2\bv{B}_{i,y}^2}{k-1}\cdot \left(\frac{\|\bv{A}\|_F^2}{\|\bv{A}_i\|_2^2}  +\frac{\|\bv{B}\|_F^2}{\|\bv{B}_i\|_2^2}\right),
\end{align*}
as desired. 
\end{proof}

\section{Matrix Product Sketching with Threshold Sampling}
\label{app:threshold_sampling}
We motivated our Priority Sampling method from \Cref{sec:prioritysampling} via a simpler matrix sketching method based on \emph{Threshold Sampling}. We include a full analysis of this method  here for pedagogical purposes, since the method is much easier to analysis. However, in general, we believe that Priority Sampling is preferable since it offers a fixed size sketch. In terms of accuracy, recent work on inner product sketching finds that both methods perform nearly identically to each other \citep{dalirisampling:2024}. Experiments suggest the same is true for general matrix-matrix product sketching.

\paragraph{Sketching.} 
As discussed, Threshold Sampling uses a shared hash function $h: [n]\rightarrow [0,1]$, which is assumed to be uniformly random. As shown in the pseudocode in \Cref{alg:threshold_sampling}, the method selects all rows from $\bv{A}$ for which ${h(i)}/{\|\bv{A}_i\|_2^2}$ falls below a fixed ``global threshold'', $\tau_{\bv{A}} = {k}/{\|\bv{A}\|_F^2}$. Here, $k$ is a parameter that determines the size of the sketch $\mathcal{S}(\bv{A})$ produced by \Cref{alg:threshold_sampling}. There will be at most $k$ rows selected in expectation, but the exact number depends on the random choice of $h$.

\begin{figure}[ht]
\vspace{-1em}
\begin{algorithm}[H]
    \caption{Threshold Sampling}
    \label{alg:threshold_sampling}
    \begin{algorithmic}[1]
        \Require Matrix $\bv{A}$ of size $n\times d$, random seed $s$, target number of row samples, $k$.
        \Ensure Sketch $\mathcal{S}(\bv{A}) = \{\mathcal{I}_{\bv{A}}, V_{\bv{A}}, \tau_{\bv{A}}\}$, where $\mathcal{I}_{\bv{A}}$ is a subset of row indices from $\{1, \ldots, n\}$ and $V_{\bv{A}}$ contains $\bv{A}_i$ for all $i\in \mathcal{I}_{\bv{A}}$.
        \algrule
        \State Use random seed $s$ to select a uniformly random hash function $h: \{1,..., n\}\rightarrow [0,1]$. 
        \State Initialize $\mathcal{I}_{\bv{A}}$ and $V_{\bv{A}}$ to be empty lists.
        \For{$i\in 1, \ldots, n$}
            \State Set threshold $\tau_i =  k\cdot \frac{\|\bv{A}_i\|_2^2}{\|\bv{A}\|_F^2}$.
                \If{$h(i) \leq \tau_i$}
              \State Append $i$ to $\mathcal{I}_{\bv{A}}$, append $\bv{A}_i$ to $V_{\bv{A}}$.
              \EndIf
        \EndFor
        \State \Return $\mathcal{S}(\bv{A}) = \{\mathcal{I}_{\bv{A}}, V_{\bv{A}}, \tau_{\bv{A}}\}$ where $\tau_{\bv{A}} = k/\|\bv{A}\|_F^2$.
    \end{algorithmic}
    \vspace{-.2em}
\end{algorithm}
\vspace{-2em}
\end{figure}

\paragraph{Estimation.} 
Similar to Priority Sampling, after constructing our sketches $\mathcal{S}(\bv{A})$ and $\mathcal{S}(\bv{B})$, we approximate the matrix product of $\bv{A}$ and $\bv{B}$ by computing a weighted sum of outerproducts of rows included in both $\mathcal{S}(\bv{A})$ and $\mathcal{S}(\bv{B})$. In fact, we can use the exact same procedure defined in \Cref{alg:approximate_matrix_multiplication} from \Cref{sec:prioritysampling}.

Unlike Priority Sampling, Threshold Sampling ensures that the probability of sampling any given row $\bv{A}_i$ is an independent random event. There is no dependence on the event that another row $j$ gets sampled. In particular, we can easily compute the probability of index $i$ being included in both sketches $\mathcal{S}(\bv{A})$ and $\mathcal{S}(\bv{B})$. It is exactly equal to $p_i = \min\left(1, k \cdot \frac{\|\bv{A}_i\|_2^2}{\|\bv{A}\|_F^2}, k \cdot \frac{\|\bv{B}_i\|_2^2}{\|\bv{B}\|_F^2}\right)$. 

\paragraph{Guarantees.} 
Our primary theoretical guarantee for Threshold Sampling can be stated as follows:
\begin{theorem}\label{thm:threshold}
Let $\bv{A}\in \R^{n\times d}$, $\bv{B}\in \R^{n\times m}$, and let $\mathcal{S}(\bv{A})=\{\mathcal{I}_{\bv{A}}, V_{\bv{A}}, \tau_{\bv{A}}\}$ and $\mathcal{S}(\bv{B})=\{\mathcal{I}_{\bv{B}}, V_{\bv{B}}, \tau_{\bv{B}}\}$ be sketches produced by \Cref{alg:threshold_sampling} with input $k$ and a shared seed $s$. Suppose $\bv{W}$ is the approximate matrix of $\bv{A}^T \bv{B}$ calculated using \Cref{alg:approximate_matrix_multiplication} on these sketches. Then, $\E\left[\bv{W}\right] = \bv{A}^T\bv{B}$ and
\begin{align*}
\E\left[\|\bv{W} - \bv{A}^T\bv{B}\|^2_F\right] &\leq \frac{2}{k} \|\bv{A}\|_F^2\|\bv{B}\|_F^2.
\end{align*}
Additionally, $\E\left[|\mathcal{I}_{\bv{A}}|\right] \leq k$ and $\E\left[|\mathcal{I}_{\bv{B}}|\right] \leq k$. I.e., each sketch contains no more than $k$ row indices in expectation.
\end{theorem}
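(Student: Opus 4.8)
The plan is to mirror the structure of the Priority Sampling analysis (\Cref{lemma:innerprod_columns} together with the proof of \Cref{thm:main_priority}), but to exploit the fact that Threshold Sampling produces \emph{independent} inclusion events, which eliminates the need for the uncorrelated-scaling argument used there. First I would dispose of the size bounds: index $i$ enters $\mathcal{I}_{\bv{A}}$ precisely when $h(i)\le k\|\bv{A}_i\|_2^2/\|\bv{A}\|_F^2$, an event of probability $\min(1, k\|\bv{A}_i\|_2^2/\|\bv{A}\|_F^2)$, so linearity of expectation gives $\E[|\mathcal{I}_{\bv{A}}|]\le\sum_i k\|\bv{A}_i\|_2^2/\|\bv{A}\|_F^2 = k$, and symmetrically for $\bv{B}$.

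Next I would handle unbiasedness. Let $\mathbbm{1}_i$ indicate the event $i\in\mathcal{T}=\mathcal{I}_{\bv{A}}\cap\mathcal{I}_{\bv{B}}$. Because the \emph{same} hash $h$ is used by both processes, $i\in\mathcal{T}$ iff $h(i)$ lies below both thresholds, so $\Pr[\mathbbm{1}_i=1]=p_i:=\min(1, k\|\bv{A}_i\|_2^2/\|\bv{A}\|_F^2, k\|\bv{B}_i\|_2^2/\|\bv{B}\|_F^2)$, which is exactly the denominator $\min(1,\|\bv{A}_i\|_2^2\tau_{\bv{A}},\|\bv{B}_i\|_2^2\tau_{\bv{B}})$ appearing in \Cref{alg:approximate_matrix_multiplication} once we substitute $\tau_{\bv{A}}=k/\|\bv{A}\|_F^2$ and $\tau_{\bv{B}}=k/\|\bv{B}\|_F^2$. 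Writing $\bv{W}=\sum_{i=1}^n\frac{\mathbbm{1}_i}{p_i}\bv{A}_i\bv{B}_i^T$ and taking expectations termwise yields $\E[\bv{W}]=\sum_i\bv{A}_i\bv{B}_i^T=\bv{A}^T\bv{B}$; indices with $p_i=0$ (zero rows) contribute nothing and are ignored throughout.

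For the error bound I would work entrywise, exactly as in \Cref{lemma:innerprod_columns}. Fixing $(x,y)$, write $\bv{W}_{x,y}=\sum_i\frac{\mathbbm{1}_i}{p_i}\bv{A}_{i,x}\bv{B}_{i,y}$. Since $\mathbbm{1}_i$ is a function of $h(i)$ alone and the values $h(1),\dots,h(n)$ are mutually independent, the summands are independent, so $\Var[\bv{W}_{x,y}]=\sum_i\bv{A}_{i,x}^2\bv{B}_{i,y}^2\,\Var[\mathbbm{1}_i/p_i]$. A one-line computation gives $\Var[\mathbbm{1}_i/p_i]=(1-p_i)/p_i=\max\!\bigl(0,\ \|\bv{A}\|_F^2/(k\|\bv{A}_i\|_2^2)-1,\ \|\bv{B}\|_F^2/(k\|\bv{B}_i\|_2^2)-1\bigr)\le\|\bv{A}\|_F^2/(k\|\bv{A}_i\|_2^2)+\|\bv{B}\|_F^2/(k\|\bv{B}_i\|_2^2)$. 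This reproduces the bound of \Cref{lemma:innerprod_columns} with $k-1$ replaced by $k$. Summing over all $(x,y)$ and collapsing $\sum_x\bv{A}_{i,x}^2=\|\bv{A}_i\|_2^2$ and $\sum_y\bv{B}_{i,y}^2=\|\bv{B}_i\|_2^2$ exactly as in the proof of \Cref{thm:main_priority} telescopes to $\frac{2}{k}\|\bv{A}\|_F^2\|\bv{B}\|_F^2$, completing the argument.

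I do not expect a genuine obstacle here; the only point requiring a little care is the per-coordinate variance bookkeeping, specifically noting that for a "heavy" row with $p_i=1$ the variance term vanishes, which is what makes the passage from $\max(0,\cdot,\cdot)$ to a sum of two nonnegative terms harmless. In particular, because the hash values are independent, the variance splits additively over the rows with no correlation argument at all, so the analysis is strictly simpler than the Priority Sampling case where without-replacement dependence forces the uncorrelated-scaling trick.
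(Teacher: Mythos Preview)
Your proposal is correct and follows essentially the same approach as the paper: define the joint-inclusion indicators $\mathbbm{1}_i$, use independence of the hash values to split the variance of each entry $\bv{W}_{x,y}$ additively over rows, and then bound $\max$ by a sum to reach $\tfrac{2}{k}\|\bv{A}\|_F^2\|\bv{B}\|_F^2$. The only cosmetic difference is ordering: you bound $\Var[\mathbbm{1}_i/p_i]\le \|\bv{A}\|_F^2/(k\|\bv{A}_i\|_2^2)+\|\bv{B}\|_F^2/(k\|\bv{B}_i\|_2^2)$ per row first (mirroring \Cref{lemma:innerprod_columns}) and then sum over $(x,y)$, whereas the paper sums over $(x,y)$ first and then applies the $\max\le$ sum step on $\sum_{i\in\mathcal{H}}\|\bv{A}_i\|_2^2\|\bv{B}_i\|_2^2/p_i$.
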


\Cref{thm:threshold} essentially matches \Cref{thm:main_priority}, although is actually a bit tighter, as the $2/(k-1)$ prefactor is replaced with $2/k$. The only disadvantage of the theorem is that we do not have a fixed upper bound on $|\mathcal{I}_{\bv{A}}|$ and $|\mathcal{I}_{\bv{B}}|$, which are equal to the number of rows sampled from $\bv{A}$ and $\bv{B}$, respectively. We also remark that, as for \Cref{thm:main_priority}, \Cref{thm:threshold} an be combined with Markov's inequality to give a high probability bound: if we set $k = \frac{2/\delta}{\epsilon^2}$ then we achieve error $\|\bv{W} - \bv{A}^T\bv{B}\|_F \leq \epsilon \|\bv{A}\|_F\|\bv{B}\|_F$ with probability $1-\delta$.

\begin{proof}[Proof of \Cref{thm:threshold}] Let $\mathbbm{1}_i$ denote the indicator random variable for the event that $i$ is included in \emph{both} $\mathcal{I}_{\bv{A}}$ and $\mathcal{I}_{\bv{B}}$. $\mathbbm{1}_i = 1$ if this event occurs and $0$ if it does not. Note that, for $i\neq j$, $\mathbbm{1}_i$ is independent from $\mathbbm{1}_j$, since the hash values $h(i)$ and $h(j)$ are drawn uniformly and independently from $[0,1]$.
Moreover, we claim that $\mathbbm{1}_i$ is equal to $1$ with probability:
\begin{align}
\label{eq:prob_claim}
    p_i = \min\left(1, \frac{k\cdot \|\bv{A}_i\|_2^2}{\|\bv{A}\|_F^2}, \frac{k\cdot \|\bv{B}_i\|_2^2}{\|\bv{B}\|_F^2}\right) = \min(1, \tau_{\bv{A}}\cdot \|\bv{A}_i\|_2^2, \tau_{\bv{B}}\cdot \|\bv{B}_i\|_2^2).
\end{align}
This is because for index $i$ to be included in both $\mathcal{I}_{\bv{A}}$ and $\mathcal{I}_{\bv{B}}$, $h(i)$ must be less than both ${\|\bv{A}_i\|_2^2}/{\|\bv{A}\|_F^2}$ and ${\|\bv{B}_i\|_2^2}/{\|\bv{B}\|_F^2}$ simultaneously (see line 3 of \Cref{alg:threshold_sampling}).
Given the probability of sampling each item, we can find the expectation of the approximation $\bv{W}$.
\begin{align}
\label{eq:expectation_bound}
    \E[\bv{W}] = \sum_{i=1}^{n} p_i \cdot \frac{\bv{A}_i\bv{B}_i^T}{p_i} = \sum_{i=1}^{n}\bv{A}_i\bv{B}_i^T = \bv{A}^T \bv{B}.
\end{align}
This proves the desired claim on the expection of $\bv{W}$. It is left to bound the $\E\left[\|\bv{W} - \bv{A}^T\bv{B}\|^2_F\right]$. 

We do so by bounding the squared error of each entry in $\bv{W}$ separately. In particular, for the $x,y$ entry, we write:
\begin{align*}
    \E\left[\left(\bv{W}_{x,y} - [\bv{A}^T\bv{B}]_{x,y}\right)^2\right] = \Var\left[\bv{W}_{x,y}\right] =  
    \Var \left [\sum_{i=1}^n \mathbbm{1}_i \frac{\bv{A}_{i,x}\bv{B}_{i,y}}{p_i}\right] 
    &= \sum_{i=1}^n \Var \left [\mathbbm{1}_i \frac{\bv{A}_{i,x}\bv{B}_{i,y}}{p_i}\right].
\end{align*}
Above we use the fact that $\mathbbm{1}_1, \ldots, \mathbbm{1}_n$ are independent to apply linearity of variance. Let $\mathcal{H}$ denote the the set of all $i$ for which $p_i \neq 0$ and $p_i \neq 1$. Using that $\Var[\mathbbm{1}_i] = p_i(1-p_i)$, we have: 
\begin{align*}
\E\left[\left(\bv{W}_{x,y} - [\bv{A}^T\bv{B}]_{x,y}\right)^2\right] = \sum_{i\in \mathcal{H}}\frac{\bv{A}_{i,x}^2\bv{B}_{i,y}^2}{p_i^2} \cdot p_i(1-p_i) \leq \sum_{i\in \mathcal{H}}\frac{\bv{A}_{i,x}^2\bv{B}_{i,y}^2}{p_i}.
\end{align*}
So we can bound $\E\left[\|\bv{W} - \bv{A}^T\bv{B}\|^2_F\right]$ by:
\begin{align*}
\E\left[\|\bv{W} - \bv{A}^T\bv{B}\|^2_F\right] &= \sum_{x=1}^d \sum_{y=1}^m \E\left[\left(\bv{W}_{x,y} - [\bv{A}^T\bv{B}]_{x,y}\right)^2\right] \\
&\leq \sum_{x=1}^{d} \sum_{y=1}^{m}\sum_{i\in \mathcal{H}} \frac{\bv{A}_{i,x}^2\cdot\bv{B}_{i,y}^2}{p_i} =
\sum_{i\in \mathcal{H}} \frac{1}{p_i} \sum_{x=1}^{d} \bv{A}_{i,x}^2 \sum_{y=1}^{m} \bv{B}_{i,y}^2\\
&= \sum_{i\in \mathcal{H}} \frac{1}{p_i} \|\bv{A}_i\|_2^2 \|\bv{B}_i\|_2^2 = \sum_{i\in \mathcal{H}} \frac{\|\bv{B}_i\|_2^2\cdot \|\bv{A}_{i}\|_2^2}{\min\left(\frac{k\cdot \|\bv{A}_i\|_2^2}{\|\bv{A}\|_F^2}, \frac{k\cdot \|\bv{B}_i\|_2^2}{\|\bv{B}\|_F^2}\right)}. 
\end{align*}
Recall that we defined $p_i = \min\left(1, \frac{k\cdot \|\bv{A}_i\|_2^2}{\|\bv{A}\|_F^2}, \frac{k\cdot \|\bv{B}_i\|_2^2}{\|\bv{B}\|_F^2}\right)$, so in the last step, we have used the fact that $p_i \neq 1$ for $i\in \mathcal{H}$. Continuing, we can bound:
\begin{align*}
\E\left[\|\bv{W} - \bv{A}^T\bv{B}\|^2_F\right] &\leq \sum_{i\in \mathcal{H}} \|\bv{A}\|_F^2\|\bv{B}\|_F^2 \frac{\frac{\|\bv{B}_i\|_2^2}{\|\bv{B}\|_F^2}\cdot \frac{\|\bv{A}_{i}\|_2^2}{\|\bv{A}\|_F^2}}{\min\left(\frac{k\cdot \|\bv{A}_i\|_2^2}{\|\bv{A}\|_F^2}, \frac{k\cdot \|\bv{B}_i\|_2^2}{\|\bv{B}\|_F^2}\right)}  \\
&\leq \sum_{i\in \mathcal{H}} \|\bv{A}\|_F^2\|\bv{B}\|_F^2\frac{\max(\|\bv{A}_i\|_2^2/\|\bv{A}\|_F^2,\|\bv{B}_i\|_2^2/\|\bv{B}\|_F^2)}{k} \\
&\leq 
\frac{\|\bv{A}\|_F^2\|\bv{B}\|_F^2}{k}\sum_{i\in \mathcal{H}} \frac{\|\bv{A}_i\|_2^2}{\|\bv{A}\|_F^2} + \frac{\|\bv{B}_i\|_2^2}{\|\bv{B}\|_F^2} \leq 
\frac{2}{k}\|\bv{B}\|_F^2\|\bv{A}\|_F^2.
\end{align*}
In the second to last step, we upper bounded the maximum but the sum.
\end{proof}

\section{Further Experiments on Attention Models}
\label{app:att}
\begin{figure}[ht]
    \centering
    \begin{subfigure}[b]{0.32\linewidth}
        \centering
        \includegraphics[width=\textwidth]{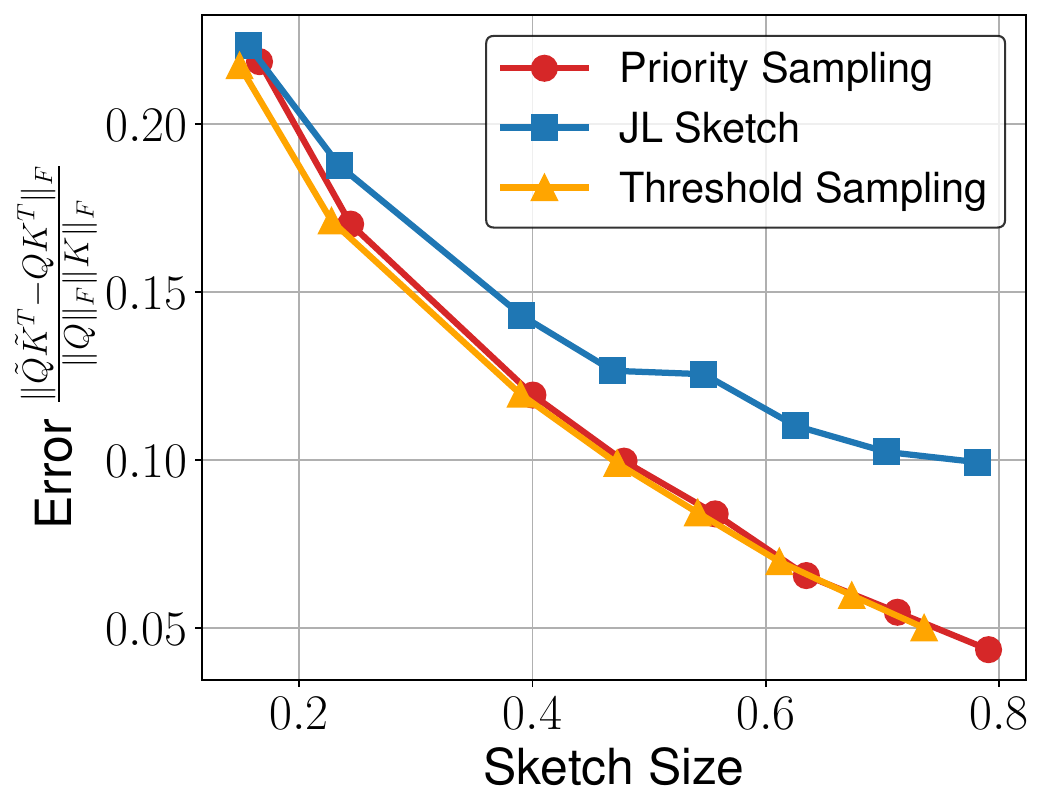}
        \caption{Layer 0, Key Sketched}
        \label{subfig:att0}
    \end{subfigure}
    \hfill 
    \begin{subfigure}[b]{0.32\linewidth}
        \centering
        \includegraphics[width=\textwidth]{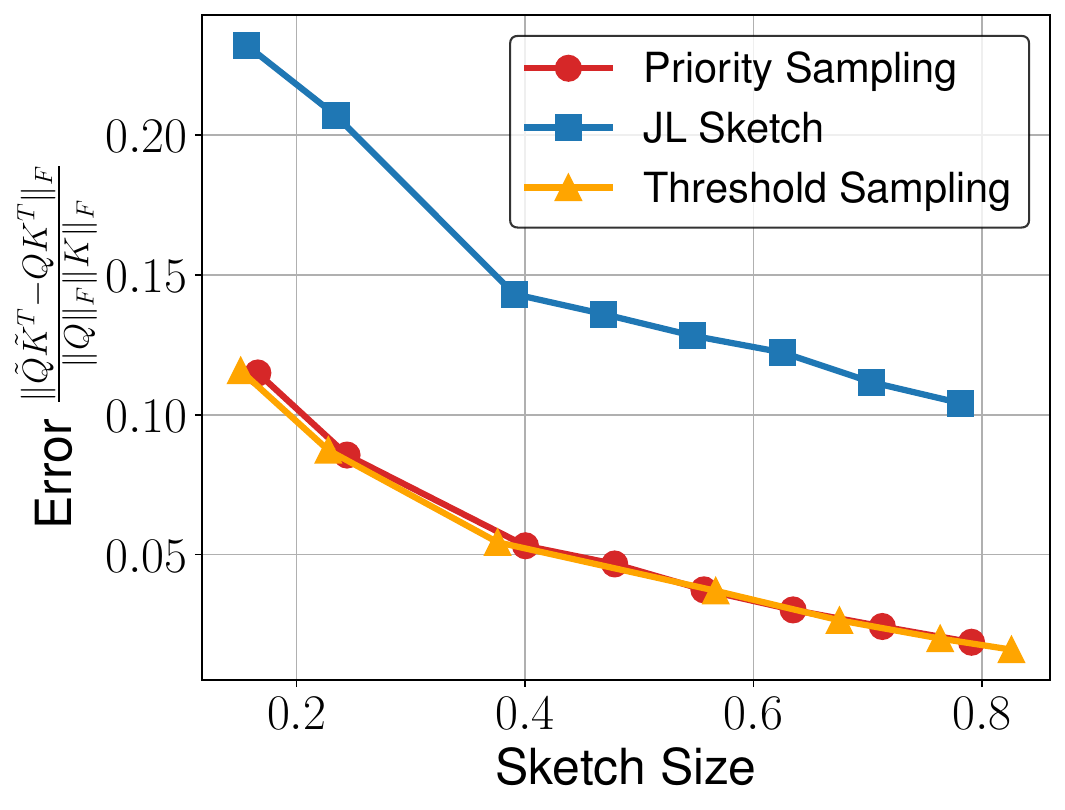}
        \caption{Layer 15, Key Sketched}
        \label{subfig:att15}
    \end{subfigure}
    \hfill    
    \begin{subfigure}[b]{0.32\linewidth}
        \centering
        \includegraphics[width=\textwidth]{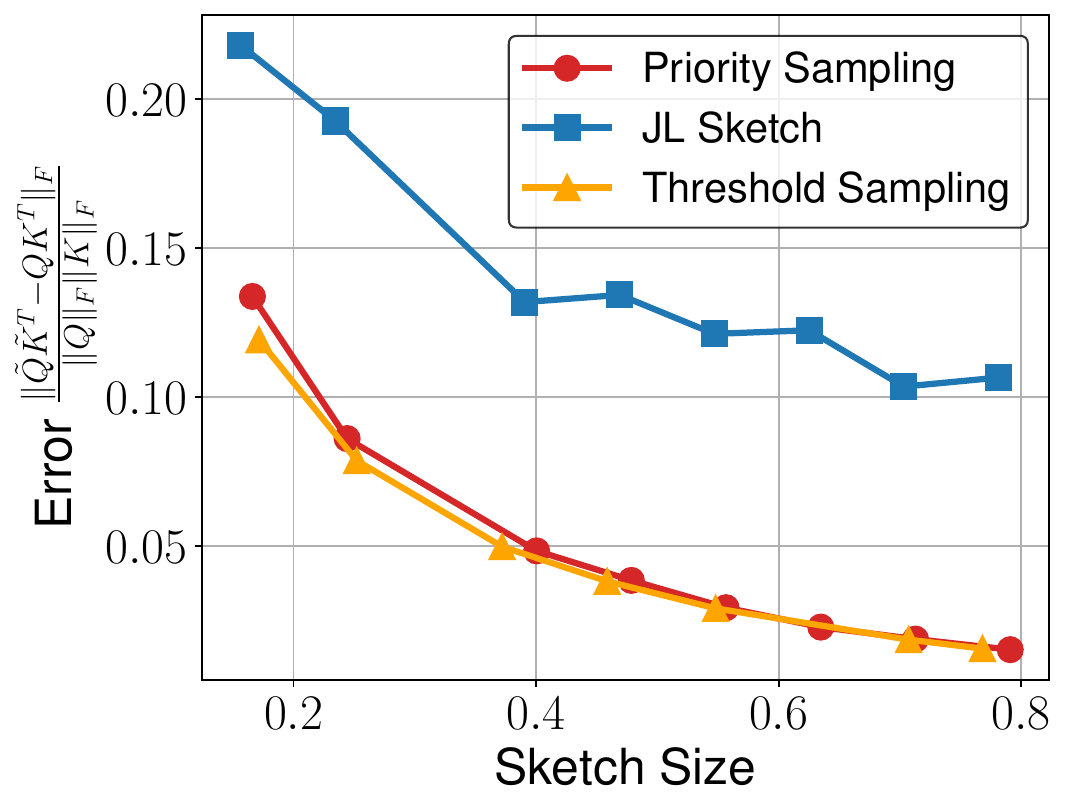}
        \caption{Layer 31, Key Sketched}
        \label{subfig:att31}
    \end{subfigure}
    \caption{Comparison of KV Cache Sketching Methods on the LongBench for \texttt{MultiFieldQA}: The plots illustrate the accuracy of various sketching methods in approximating $\bv{Q} \bv{K}^T$ across different sketch sizes. The Query matrix remains untouched, and only the Key matrices $\bv{K}$ are sketched using Priority Sampling and Threshold Sampling, whereas the JL sketch requires the projection of both matrices $\bv{Q}, \bv{K}$.}
\end{figure}
One key reason for not sketching the Query matrix in this application is that it is not quantized, unlike the Key and Value matrices. This means there is no need to apply sketching techniques to the Query matrix, as it is recalculated for each input token and does not benefit from compression methods used on more static, larger matrices. An important advantage of using the sampling method is that it allows selective sketching of matrices that are both large and have a static or slow-changing nature, such as the Key matrix. In contrast, linear sketching requires projecting both the Query and Key matrices, regardless of their size or dynamism. In our experiments, we had access to the entire Queries matrix while we applied Priority Sampling to sketch the considerably larger Key matrix. This approach effectively demonstrates the efficiency of sampling in handling large-scale data while preserving the dynamic properties of the Query matrix in real-time applications.


\end{document}